\documentclass{article} 


\usepackage{amsmath,amsfonts,bm}









\def\eqref#1{equation~\ref{#1}}









\def\1{\bm{1}}

\def\eps{{\epsilon}}










\DeclareMathAlphabet{\mathsfit}{\encodingdefault}{\sfdefault}{m}{sl}
\SetMathAlphabet{\mathsfit}{bold}{\encodingdefault}{\sfdefault}{bx}{n}











\newcommand{\R}{\mathbb{R}}

\newcommand{\Var}{\mathrm{Var}}



\usepackage{amsmath,amssymb}
\usepackage{multicol}
\usepackage{amsthm}
\usepackage{multirow}
\usepackage{booktabs}
\usepackage[skip=0pt]{subcaption}
\usepackage{lipsum}
\usepackage[shortlabels]{enumitem}
\usepackage{cancel}
\usepackage{wrapfig}
\usepackage{array}
\usepackage{siunitx}
\usepackage{csvsimple}
\usepackage[multidot]{grffile}
\usepackage{dblfloatfix}
\usepackage{hyperref}
\usepackage{makecell}
\usepackage{mathtools}
\usepackage{comment}
\usepackage[capitalise]{cleveref}

\usepackage{geometry}
\usepackage{enumitem}

\usepackage[multiple]{footmisc}
\usepackage{mathrsfs}
\usepackage{todonotes}
\usepackage{tikz}
\usepackage{hyperref}
\usepackage{cleveref}
\usepackage{algpseudocode,algorithm,algorithmicx}
\usepackage{xfrac}

\usepackage{graphicx} 
\usepackage{color}

\usepackage{array}
\usepackage{amssymb}
\usepackage{amsmath}
\usepackage{xspace}
\usepackage{fancyhdr}
\usepackage{comment}

\hypersetup{final}

\renewcommand{\eps}{\ensuremath{\varepsilon}}


\newcommand{\bfA}{\ensuremath{\mathbf{A}}}

\newcommand{\bfC}{\ensuremath{\mathbf{C}}}

\newcommand{\bfI}{\ensuremath{\mathbf{I}}}

\newcommand{\bfc}{\ensuremath{\mathbf{c}}}

\newcommand{\bfm}{\ensuremath{\mathbf{m}}}

\newcommand{\bfx}{\ensuremath{\mathbf{x}}}

\newcommand{\bfz}{\ensuremath{\mathbf{z}}}

\newcommand{\calM}{\ensuremath{\mathcal{M}}}


\renewcommand{\Pr}{\mathop{\mathbf{Pr}}}

\renewcommand{\R}{\mathbb{R}}

\newtheorem{lem}{Lemma}[section]

\newtheorem{definition}[lem]{Definition}



\makeatletter
\newcommand{\vast}{\bBigg@{4}}
\newcommand{\Vast}{\bBigg@{5}}
\makeatother

\newcommand{\ex}[2]{{\ifx&#1& \mathbb{E} \else
\underset{#1}{\mathbb{E}} \fi \left[#2\right]}}
\newcommand{\pr}[2]{{\ifx&#1& \mathbb{P} \else
\underset{#1}{\mathbb{P}} \fi \left[#2\right]}}

\renewcommand{\Var}[1]{\ensuremath{\mathbf{Var}\left(#1\right)}}

\newcommand{\ltwo}[1]{\left\|#1\right\|_2}

\usepackage{ulem}

\DeclarePairedDelimiterX{\infdivx}[2]{(}{)}{%
  #1\;\delimsize\|\;#2%
}

\newcommand*\samethanks[1][\value{footnote}]{\footnotemark[#1]}


\newcommand{\mypar}[1]{\smallskip
	\noindent{\textbf{{#1}:}}}
	
\renewcommand{\epsilon}{\varepsilon}

\renewcommand{\tilde}{\widetilde}

\newcommand{\norm}[1]{{\left\Vert {#1} \right\Vert}}

\setlist{nolistsep}
\setlist[itemize]{noitemsep, topsep=0pt}

\setlist{nolistsep}
\setlist[itemize]{noitemsep, topsep=0pt}

\newtheorem{theorem}[lem]{Theorem}

\newcommand{\simiid}{\stackrel{iid}{\sim}}

\renewcommand{\cite}[1]{\citep{#1}}

\usepackage{hyperref}
\usepackage{natbib}
\usepackage{url}

\title{Near Exact Privacy Amplification for Matrix Mechanisms}

\author{
Christopher A. Choquette-Choo\thanks{Google DeepMind. \texttt{\{cchoquette, steinke, athakurta\}@google.com}}
\and 
Arun Ganesh\thanks{Google Research. \texttt{arunganesh@google.com}}
\and
Saminul Haque\thanks{Stanford University. This work done while the author was an intern at Google DeepMind. \texttt{saminulh@stanford.edu}}
\and 
Thomas Steinke\samethanks[1]
\and 
Abhradeep Thakurta\samethanks[1]
}

\begin{document}
\maketitle

\begin{abstract}
We study the problem of computing the privacy parameters for DP machine learning when using privacy amplification via random batching and noise correlated across rounds via a correlation matrix $\bfC$ (i.e., the matrix mechanism). Past work on this problem either only applied to banded $\bfC$, or gave loose privacy parameters. In this work, we give a framework for computing near-exact privacy parameters for any lower-triangular, non-negative $\bfC$. Our framework allows us to optimize the correlation matrix $\bfC$ while accounting for amplification, whereas past work could not. Empirically, we show this lets us achieve smaller RMSE on prefix sums than the previous state-of-the-art (SOTA).
We also show that we can improve on the SOTA performance on deep learning tasks. Our two main technical tools are (i) using Monte Carlo accounting to bypass composition, which was the main technical challenge for past work, and (ii) a ``balls-in-bins'' batching scheme that enables easy privacy analysis and implementation-wise is closer to shuffling (widely considered a practical random batching method) than Poisson sampling. 
\end{abstract}

\section{Introduction}

Many recent works have improved the privacy-utility tradeoff of differentially private machine learning via improvements in \textit{either} privacy amplification or how noise is correlated across rounds. Privacy amplification \citep{KLNRS,balle2018improving, PAS, balle2020privacy} uses randomness in data processing to improve the existing privacy guarantees. Correlated noise \citep{kairouz2021practical, denisov2022improved, choquette2022multi}, or DP-FTRL, uses the matrix mechanism to add noise in DP training such that the noise added in one round is cancelled out in subsequent rounds. Specifically, if $\bfz$ is an i.i.d. Gaussian matrix, independent noise approaches, i.e.,  differentially private stochastic gradient descent (DP-SGD) clip gradients and add noise $\bfz[i, :]$ in the $i$th round of DP training; instead, DP-FTRL $(\bfC^{-1} \bfz)[i, :]$ as the noise in the $i$th round, where $\bfC$ is a carefully chosen ``correlation matrix''. 

\citet{choquette2024amplified} showed that one could simultaneously obtain the benefits of privacy amplification and noise correlation, leading to a class of ``banded''\footnote{A square lower-triangular matrix is $b$-banded, if at most $b$ of the principal diagonals are non-zero.} $\bfC$ correlation matrices that Pareto dominated either alone.
However, their amplification analysis only applied to these specific banded matrices, whereas we may want to work with non-banded $\bfC$ due to utility or efficiency concerns \citep{dvijotham2024efficientnearoptimalnoisegeneration, mcmahan2024hasslefreealgorithmprivatelearning}. \citet{choquettechoo2024privacy} gives an amplification analysis that works with general (i.e., non-banded) $\bfC$ and is tight in the limit as $\epsilon \rightarrow 0$, but for larger $\epsilon$ values used commonly in DP machine learning, their analysis inherently introduces slack, and can greatly overestimate the true privacy parameter. Both of these works' limitations are due to a reliance on composition theorems. We propose an approach which bypasses composition and in turn avoids both of these limitations simultaneously:

\begin{center}
\textit{In this work, we use Monte Carlo accounting to enable nearly-exact privacy analysis and optimization of these amplified correlated noise mechanisms without any constraint on the structure of the correlation matrices. In turn, we are able to achieve smaller error for prefix sums than past work in various settings, and improve the state-of-the-art utility on deep learning tasks, even while using memory-efficient correlation matrices.}
\end{center}
This allows us to obtain significant amplification benefits for generic $\bfC$ matrices, which were shown to be better than constrained versions in both utility~\citep{choquette2022multi} and memory~\citep{dvijotham2024efficientnearoptimalnoisegeneration} but cannot yet be combined with privacy amplification in the practical regimes of interest for $\epsilon$ in DP machine learning. In particular, we first consider the standard objective for correlated noise mechanisms, RMSE of prefix sums (as defined in e.g., \citet{kairouz2021practical, denisov2022improved}; see Section~\ref{sec:prevwork}), and find that our techniques lead to $\bfC$ with significant reduction in RMSE over the prior SOTA around $10\%$. When used to train models on the standard CIFAR-10 benchmark\footnote{It is worth mentioning that accuracy of training on the CIFAR-10 benchmark with correlated noise has been highly optimized for over a long series of papers (e.g.~\citep{choquette2022multi,choquette2024amplified}). So, any tangible and unconditional improvement is significant.}, the RMSE improvements translate to up to 1\% absolute accuracy improvements compared to the prior SOTA~\citep{choquette2024amplified}.

\subsection{Our Contributions}
 
\subsubsection{Algorithmic Contributions}

\mypar{Balls-in-bins minibatching} We propose a more practical sampling scheme that (i) does not require random access to the entire dataset, (ii) achieves much better privacy amplification than the next-best alternative, shuffling, and (iii) enables efficient amplification analysis via Monte Carlo sampling which would otherwise be NP-hard to compute. We are able to implement a practical variant of this sampling scheme that enforces fixed batch sizes (which are required for compatability with modern ML techniques like XLA compilation) for our deep learning experiments.

\mypar{Near-exact analysis via Monte Carlo} We recognize a common issue with prior amplification analysis of correlated noise mechanisms, which is their reliance on composition which led to slack for various reasons (see Section~\ref{sec:prevwork}). We thus show how to use Monte Carlo accounting \citep{wang2023randomized} which instead observes that approximate DP guarantees can be written as the expected value of some function of the privacy loss and $\epsilon$ (see Section~\ref{sec:montecarlo} for more details). Because it is easy to do Monte Carlo accounting for the whole matrix $\bfC \bfx + \bfz$, we can bypass the need for composition. In turn, we are able to obtain near-exact privacy analysis (only paying for the sampling error of Monte Carlo approximation, which can be made arbitrarily small) while supporting general $\bfC$, improving upon the weaknesses of both the past works.

\mypar{Optimizing $\bfC$ under amplification} We, for the first time, show how to optimize $\bfC$ to minimize the RMSE under privacy amplification. This remediates a significant issue with prior work which required an expensive grid-search to check the amplified RMSE under all possible configurations (as this required post-hoc analysis after optimization of $\bfC$, where the optimization did not account for amplification). A crucial aspect of Monte Carlo accounting is that it allows one to calibrate the quality of the privacy guarantee based on the computation available, i.e., one can obtain a quick (and less rigorous) $\epsilon$ parameter by using a small number of MC samples, which can be made rigorous (with appropriate failure probability $\delta$) by increasing the number of samples. So, while optimizing for the $\bfC$ matrices we use fewer MC samples, and when providing the final privacy guarantee with the optimized $\bfC$ matrix, we run it with a large number of samples to achieve appropriate convergence of the sampler.

\subsubsection{Empirical Evaluation}
\label{sec:emp}

\mypar{RMSE analysis} We first look at the standard objective for evaluating correlated noise mechanisms,  RMSE on prefix sums (formally defined in Section~\ref{sec:prevwork}). We compare the RMSE achieved by matrices amplified using our analysis, and by the previous SOTA of banded matrices and Poisson sampling of \citet{choquette2024amplified}. We demonstrate that because we can directly optimize the correlation matrix for the RMSE under amplification, we can achieve up to a 10\% reduction in RMSE compared to the SOTA. 

\mypar{Empirical evaluation on CIFAR-10} We next use correlation matrices and $\sigma$ computed using our privacy analysis to train a VGG model on CIFAR-10. We again compare to the SOTA of \citet{choquette2024amplified}. We show that despite using a weaker sampling assumption, matrices amplified by balls-in-bins batching Pareto dominate the approach of \citep{choquette2024amplified}, getting equal accuracy at smaller $\epsilon$ and giving up to 1\% absolute accuracy improvements at larger $\epsilon$. 

\subsection{Background and Prior Work}
\label{sec:prevwork}
Here, we summarize prior work and highlight why our contributions above are significant.

\mypar{Privacy amplification} Though randomness in data processing reduces the variance of noise required, the main challenge with privacy amplification is giving tight privacy analysis of the amplified mechanism, which for DP-SGD usually requires computing a divergence between two mixtures of Gaussians. There are several forms of privacy amplification, but the two most popular are sampling \citep{DP-DL, BBG18} and shuffling \citep{balle2018improving, erlingsson2019private, PAS, feldman22hiding}.

Privacy amplification by (Poisson) sampling assumes the batches are formed by including each element independently with some probability $p$. Because the sampling randomness is independent across rounds, tight privacy analyses of DP training with independent noise (DP-SGD) are well-known \citep{koskela21tight}, and supported by open-source libraries  \citep{dp_accounting}. However, Poisson sampling is often impractical to implement, especially when working with larger datasets~\citep{ponomareva23dpfy}.

Privacy amplification by shuffling in the context of DP-SGD usually assumes the batches are formed by shuffling the dataset (a single time) and then forming fixed-size batches using the shuffled order. In some limited settings its privacy analysis is well-understood \citep{feldman22hiding}, but in the setting of DP-SGD, for example, tight privacy analyses remain elusive. Technically, the difficulty is that the participations of different examples are dependent, unlike with Poisson sampling. In addition to being challenging to analyze, \citet{chua2024private} demonstrated that shuffling often gives much smaller improvements in privacy compared to sampling. Despite these difficulties, as \citet{chua2024private} note, shuffling is used much more widely in practice since it can be implemented, e.g., in a single pre-processing pass over the dataset, whereas sampling requires random access to potentially very large datasets which is often infeasible.

\mypar{Correlated noise} 
\citet{kairouz2021practical} proposed DP-FTRL, the first DP training algorithm with correlated noise, where the update at each step are the rows of $\bfx + \bfz_{tree}$ where $\bfz_{tree}$ is sampled using the binary tree mechanism of \citep{Dwork-continual}.~\citet{denisov2022improved} showed this was a special case of the updates being rows of $\bfx + \bfC^{-1} \bfz$, giving a more expressive definition for $\bfC^{-1}$ called the ``correlation matrix''. They showed these matrices could be optimized via a convex optimization program to maximize noise cancellation (i.e., minimize total noise variance).~\citet{choquette2022multi} proposed a ``multi-participation'' generalization, leading to the first correlated noise DP training algorithm to (substantially) outperform DP-SGD with practical $\epsilon$'s.~\citet{choquette2024amplified} showed that correlated noise algorithms Pareto-dominated DP-SGD in privacy-utility tradeoffs;~\citet{choquette2023correlated} showed that these algorithms provably outperformed DP-SGD.

For simplicity, these works assume $\bfC$ is non-negative and lower-triangular. The privacy guarantee of correlated noise is the same as that of the matrix mechanism $\bfC \bfx + \bfz$ by post-processing, hence based on how batches are formed we can usually give privacy guarantees in terms of an appropriate norm of $\bfC$. To optimize the matrix $\bfC$, the convex program usually minimizes the root-mean-squared-error (RMSE) of the prefix sums of $\bfx$, given by $\ltwo{\bfA \bfC^{-1}} \cdot \sigma$ where $\bfA$ is the lower-triangular all-ones matrix and $\sigma$ is the noise multiplier needed for $\bfC \bfx + \bfz$ to satisfy the target privacy guarantee.

\mypar{Challenges of combining privacy amplification and correlated noise} The main hurdle is that privacy amplification analysis  usually relies heavily on \textit{composition}: instead of analyzing the entire training run directly, we compute a privacy guarantee for each round of training separately, and then combine these in a straightforward way to analyze the entire run. However, composition requires that both sampling and noise randomness be independent across rounds and of course noise correlations violates this. Thus, the only two prior works~ \citep{choquette2024amplified,choquettechoo2024privacy} combining privacy amplification and noise correlation leveraged reductions which significantly reduce the privacy amplification benefits.

\mypar{Amplified banded matrix factorization \citep{choquette2024amplified}} The matrix $\bfC$ is $b$-banded if only the first $b$ diagonals of $\bfC$ are non-zero. For $b$-banded $\bfC$, they observe that rows $i$ and $j$ of $\bfC\bfx + \bfz$ are independent if $|i - j| \geq b$. If each example is assigned an index $k \in \{0, 1, \ldots b-1\}$ and then can only be sampled in rounds $i$ where $i \pmod b \equiv k$, the resulting privacy guarantee can be reduced to standard DP-SGD but for $n/b$ rounds and batch size scaled up by $b$. However, there are two key limitations: (i) it constrains $\bfC$ limiting its expressivity, and (ii) its analysis does not exploit the fact that the assignment of batches to indices can be randomized. This latter issue becomes apparent when $b$ is sufficiently large, where the result collapses to one without any amplification despite this randomness.

\mypar{Conditional composition \citep{choquettechoo2024privacy}} They instead show that these mechanisms have privacy guarantees upper bounded by a different independent noise mechanism, allowing the use of composition.
Specifically, consider Poisson sampling with probability $p$. Effectively what they show is that we can compute a separate privacy guarantee for each row of $\bfC \bfx + \bfz$ using an inflated sampling probability $\tilde{p} > p$ and compose these guarantees as if they were independent. This approach is more general than~\citep{choquette2024amplified} because it applies to generic $\bfC$. However, $\tilde{p}$ introduces slack in the analysis that gets worse as the noise multiplier decreases, such that the $\epsilon$ computed in low-privacy regimes can be much worse than even the unamplified $\epsilon$. This manifests practically in DP machine learning: most mechanisms obtain no benefits.

\subsection{Comparison to Chua et al.}

An independent and concurrent work of \citet{chua2024ballsandbinssamplingdpsgd} also proposed balls-in-bins sampling, defined identically to our work. We believe the fact that both papers independently arrived at the same batch sampling method is evidence that it is a natural method to consider. Similarly to our work, \citet{chua2024ballsandbinssamplingdpsgd} use Monte Carlo-based privacy accounting to handle balls-in-bins sampling. In their work, they consider the independent noise setting ($\bfC = \bfI$, as opposed to general $\bfC$ in our work), which enables them to make their sampler more efficient by (i) using an importance sampler that reduces the variance of the Monte Carlo accountant, and (ii) an ``order statistics sampling'' method which estimates a pessimistic value of $\delta$ but is more efficient than the unbiased sampler. To the best of our knowledge, their techniques do not easily translate to the setting of general $\bfC$. It is an interesting question to see if the benefits of both works can be combined, i.e., if we can design more efficient samplers for Monte Carlo accounting in the setting of general $\bfC$.

\subsection{Future Directions}

There are several interesting directions for extending the practicality of our work. One reason for our choice of balls-in-bins sampling is that it is more amenable to Monte Carlo accounting than Poisson sampling. Poisson sampling provides much stronger amplification than shuffling in most settings \citep{chua2024private}, so ignoring the practicality of Poisson sampling, enabling Monte Carlo accounting for correlated noise with Poisson sampling could thus enable even higher utility for private training. However, the fact that the corresponding mixture of Gaussians has $2^n$ modes is a challenging technical barrier to such a result. 

To be able to use Monte Carlo sampling to analyze the privacy amplification of the balls-in-bins sampling scheme, we need to first provide a PLD-dominating pair of distributions. In Lemma~\ref{lem:bib-dim-red}, we obtain this pair by taking the modes of $P$ to be sums of certain columns of $|\bfC|$. This is potentially too loose, and we may hope to instead use sums of columns of $\bfC$, thereby reducing the norm of each mode. In the unamplified setting, this loosening of the assumption is possible whenever $\bfC[:, i] \cdot \bfC[:, j]$ for any pair of rounds $i,j$ a user can participate in. However, amplification through sampling complicates this, and standard techniques for reducing dimensionality and adaptivity don't work anymore, see Appendix~\ref{sec:dim-red} for an example.

The number of samples we need to verify a DP guarantee using Monte Carlo accounting and e.g. Bernstein's inequality scales as $1/\delta$, which may be prohibitive for small $\delta$. It is an interesting but challenging question to determine if a more careful sampler and concentration analysis can give substantially improved sample complexity for analyzing correlated noise mechanisms. 

While we were able to implement a practical variant of balls-in-bins batching that was compatible with XLA compilation, we only used our variant to train a small-scale model. We leave it to future work to implement balls-in-bins batching at larger scales, and in particular compare the scalability of balls-in-bins to that of unamplified training and Poisson sampling.
\section{Privacy Loss Distributions and Monte Carlo Accounting}\label{sec:montecarlo}

We can define $(\epsilon, \delta)$-DP in terms of the hockey-stick divergence. For any two distributions $P, Q$, their $\alpha$-hockey stick divergence for $\alpha \geq 0$ is given by:

\[H_\alpha(P, Q) = \int_x \max\{P(x) - \alpha Q(x), 0\}.\]

A mechanism $\calM$ is $(\epsilon, \delta)$-DP if for any adjacent databases $D \sim D'$ (under the add-remove adjacency), we have $H_{e^\epsilon}(\calM(D), \calM(D')) \leq \delta$. Suppose $P, Q$ are a \textit{dominating pair} \citep{zhu22optimal} for $\calM$ of interest; that is, for any $D \sim D', \alpha \geq 0$ we have $H_\alpha(\calM(D), \calM(D')) \leq H_\alpha(P, Q)$.
Then, by sampling $X \sim P$ and computing $Y = \log P(X)/Q(X)$, we then say that the \textit{privacy-loss distribution} (PLD) \citep{BBG18} of $P$ and $Q$ is the law of $Y$.

Now to prove $(\epsilon, \delta)$-DP for $\calM$, it suffices to show $H_{e^\epsilon}(P, Q) \leq \delta$.
The main observation in Monte Carlo accounting is that we can write $H_\alpha$ as an expectation over $Y$, $H_\alpha(P, Q) = \mathbb{E}_Y [\max\{1 - \alpha e^{-Y}, 0\}]$.
Then, if exactly computing $H_\alpha(P, Q)$ is hard but sampling $Y$ is easy, we can estimate $\delta$ for some $\epsilon$ by taking the average of sufficiently many samples of $\max\{1 - e^{\epsilon-Y}, 0\}$. Namely, let $\hat{\delta}$ be the estimated $\delta$ value, i.e. the average over some samples of $\max\{1 - e^{\epsilon-Y}, 0\}$. As a starting point, we can already claim $(\epsilon, \hat{\delta})$-DP as an informal privacy statement. To make it formal, we can use 
the Estimate-Verify-Release framework of \citep{wang2023randomized} (\cref{alg:wrapper}): we fix a target $(\epsilon, \delta)$-DP privacy guarantee ahead of time, and use Monte Carlo accounting to verify that $\calM$ satisfies with $(\epsilon, \delta)$-DP, and only run $\calM$ if the verification succeeds.

\begin{algorithm}[htb]
\caption{Estimate-Verify-Release of \citep{wang2023randomized}} \label{alg:wrapper}
 \hspace*{\algorithmicindent} \textbf{Inputs:} Mechanism $\calM$, dataset $D$, estimated privacy parameters $\eps, \delta$.
\begin{algorithmic}[1]
\State Determine dominating pair $P, Q$ of $\calM$, get $\hat{\delta}$, an estimate of $H_{e^\epsilon}(P, Q)$.
\If{$\hat{\delta} \leq \delta$}
\State \Return $\calM(D)$
\Else
\State \Return $\bot$
\EndIf 
\end{algorithmic}
\end{algorithm}

The below theorem shows that Algorithm~\ref{alg:wrapper} can be made to incur only a constant blowup in $\delta$:

\begin{theorem}[Theorem 9 of \citet{wang2023randomized}]
Suppose in~\cref{alg:wrapper} that for any $P, Q$ such that $H_\epsilon(P, Q) > \tau \delta$, $\hat{\delta} > \delta$ w.p. at least $1 - \tau \delta$. Then~\cref{alg:wrapper} is $(\epsilon, \tau \delta)$-DP.   
\end{theorem}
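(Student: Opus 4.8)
Write $\calA$ for the Estimate--Verify--Release algorithm of \cref{alg:wrapper}. The plan is to exploit its structure: $\calA$ has two sources of randomness, namely the Monte Carlo coins that produce the estimate $\hat\delta$ (hence the ``release versus abort'' decision) and the internal coins of $\calM$, and we may and do assume these are independent. The key point I would establish first is that the dominating pair $P,Q$ in Line~1 is a pair \emph{for the mechanism $\calM$} --- a single pair valid simultaneously for every adjacent pair of datasets --- so neither the law of $\hat\delta$ nor the release probability $\rho := \Pr[\hat\delta \le \delta]$ depends on the input dataset. Consequently, for every dataset $D$ the output distribution of $\calA(D)$ is the mixture: with probability $\rho$ output a fresh draw from $\calM(D)$, and with probability $1-\rho$ output the fixed symbol $\bot$, with the \emph{same} $\rho$ for all $D$.

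Given this decomposition, I would next compute the hockey-stick divergence of the mixture for an arbitrary adjacent pair $D \sim D'$. Because $\bot$ is outside the range of $\calM$ and $e^\epsilon \ge 1$, the atom at $\bot$ contributes $\max\{(1-\rho) - e^\epsilon(1-\rho),\,0\} = 0$, while rescaling both densities on the rest of the space by $\rho$ pulls out a factor of $\rho$, giving the exact identity
\[ H_{e^\epsilon}(\calA(D),\calA(D')) \;=\; \rho \cdot H_{e^\epsilon}(\calM(D),\calM(D')). \]
Then I would split on the (algorithm-oblivious) value of $H_{e^\epsilon}(P,Q)$. If $H_{e^\epsilon}(P,Q) \le \tau\delta$, the dominating-pair property gives $H_{e^\epsilon}(\calM(D),\calM(D')) \le H_{e^\epsilon}(P,Q) \le \tau\delta$, and since $\rho \le 1$ the identity yields $H_{e^\epsilon}(\calA(D),\calA(D')) \le \tau\delta$. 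If instead $H_{e^\epsilon}(P,Q) > \tau\delta$, the hypothesis of the theorem says $\hat\delta > \delta$ with probability at least $1-\tau\delta$, i.e.\ $\rho \le \tau\delta$; combined with the trivial bound $H_{e^\epsilon}(\calM(D),\calM(D')) \le 1$ this gives $H_{e^\epsilon}(\calA(D),\calA(D')) \le \rho \le \tau\delta$. Either way $H_{e^\epsilon}(\calA(D),\calA(D')) \le \tau\delta$ for all adjacent $D \sim D'$, which is precisely $(\epsilon,\tau\delta)$-DP.

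The mixture identity and the two-line case analysis are routine; the one place I expect to need care --- and the crux of the argument --- is justifying that the release decision is independent of $\calM(D)$ \emph{and} that its distribution is dataset-independent. This is exactly what the design of \cref{alg:wrapper} buys us: it fixes a dominating pair $P,Q$ of $\calM$ and verifies $H_{e^\epsilon}(P,Q) \le \delta$, rather than trying to estimate $H_{e^\epsilon}(\calM(D),\calM(D'))$ itself, which would make $\rho$ depend on $D$ and destroy the mixture structure. A minor secondary point is to state the mixture divergence computation carefully enough that the vanishing of the $\bot$-atom term (which uses $e^\epsilon \ge 1$) and the extraction of the $\rho$ factor are both transparent.
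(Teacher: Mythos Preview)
The paper does not give its own proof of this statement; it is quoted verbatim as Theorem~9 of \citet{wang2023randomized} and used as a black box. Your proposal is correct and is essentially the standard argument for this kind of Estimate--Verify--Release result: the crucial observation that the release probability $\rho$ is dataset-independent (because $P,Q$ dominate $\calM$ globally), the mixture identity $H_{e^\epsilon}(\calA(D),\calA(D')) = \rho\cdot H_{e^\epsilon}(\calM(D),\calM(D'))$, and the two-case split on whether $H_{e^\epsilon}(P,Q)$ exceeds $\tau\delta$ are exactly the right ingredients, and your justifications for each (including the vanishing $\bot$-atom via $e^\epsilon\ge 1$ and the trivial bound $H_{e^\epsilon}\le 1$) are sound.
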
\label{thm:wrapper}

The following theorem can be used to derive the tail bound required for \cref{thm:wrapper}:

\begin{theorem}\label{thm:bernstein}
Suppose we use $s$ samples in computing $\hat{\delta}$ in Algorithm~\ref{alg:wrapper}. Then for any $P, Q$ such that $H_\epsilon(P, Q) \geq \tau \delta, \tau > 1$, we have $\Pr[\hat{\delta} < \delta] \leq \exp\left(-\frac{s(\tau-1)^2\delta}{ 8\tau/3-2/3}\right).$
\end{theorem}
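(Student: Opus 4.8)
The plan is to recognize $\hat\delta$ as an empirical mean of i.i.d.\ bounded random variables, apply a one-sided Bernstein inequality, and then remove the dependence on the unknown true value of $H_\epsilon(P,Q)$ by a short monotonicity argument.

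First I would fix notation. By the identity $H_\alpha(P,Q) = \mathbb{E}_Y[\max\{1 - \alpha e^{-Y}, 0\}]$ recalled above, the estimate produced in \cref{alg:wrapper} from $s$ samples is $\hat\delta = \frac1s\sum_{i=1}^s Z_i$, where $Z_1,\dots,Z_s$ are i.i.d.\ copies of $Z := \max\{1 - e^{\epsilon - Y}, 0\}$ for $Y$ drawn from the PLD of $(P,Q)$, so that $\mu := \mathbb{E}[Z] = H_\epsilon(P,Q)$. The hypothesis $\mu \ge \tau\delta$ with $\tau > 1$ gives $\mu > \delta$. Since $Z \in [0,1)$ deterministically, I get (i) $|Z - \mu| \le 1$ almost surely, and (ii) $\mathrm{Var}(Z) \le \mathbb{E}[Z^2] \le \mathbb{E}[Z] = \mu$, using $Z^2 \le Z$ on $[0,1]$.

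Next I would apply the one-sided (lower-tail) Bernstein inequality to $\sum_{i=1}^s (Z_i - \mu)$ with range bound $1$, summed variance at most $s\mu$, and deviation $t = s(\mu-\delta) > 0$. Since $\{\hat\delta < \delta\} \subseteq \{\sum_i (Z_i-\mu) \le -s(\mu-\delta)\}$, this yields
\[\Pr[\hat\delta < \delta] \;\le\; \exp\!\left(-\frac{s(\mu-\delta)^2}{2\mu + \tfrac23(\mu-\delta)}\right) \;=\; \exp\!\left(-\frac{3\,s\,(\mu-\delta)^2}{2(4\mu-\delta)}\right),\]
where the last equality just rewrites $2\mu + \tfrac23(\mu-\delta) = \tfrac23(4\mu - \delta)$.

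The final step, and the only subtle one, is to replace $\mu$ by the available bound $\tau\delta$. One cannot simply lower-bound the deviation $\mu - \delta$ by $(\tau-1)\delta$, because the variance term $2\mu$ in the denominator is then uncontrolled (the true divergence $H_\epsilon(P,Q)$ may exceed $\tau\delta$ by an arbitrary amount); instead I keep $t = \mu-\delta$ and show the exponent is \emph{increasing} in $\mu$. Writing $h(\mu) = \frac{3(\mu-\delta)^2}{2(4\mu-\delta)}$, a one-line quotient-rule computation gives $h'(\mu) = \frac{3(\mu-\delta)(2\mu+\delta)}{(4\mu-\delta)^2}$, which is positive for all $\mu > \delta$. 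Hence $h(\mu) \ge h(\tau\delta) = \frac{3(\tau-1)^2\delta}{2(4\tau-1)} = \frac{(\tau-1)^2\delta}{8\tau/3 - 2/3}$, and substituting into the displayed bound gives exactly the claimed inequality. I expect this monotonicity check to be the crux of the argument: it is what lets us refuse to discard the slack $\mu - \tau\delta$, and it is where the particular constant $8\tau/3 - 2/3$ in the statement originates.
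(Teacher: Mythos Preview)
Your proposal is correct and follows essentially the same route as the paper: bound the variance of each sample by its mean (the paper invokes Bhatia--Davis, you use the equivalent $Z^2\le Z$ on $[0,1]$), apply one-sided Bernstein with $t=\mu-\delta$, and then use monotonicity of the exponent in $\mu$ to replace $\mu$ by $\tau\delta$. Your derivative computation $h'(\mu)=\tfrac{3(\mu-\delta)(2\mu+\delta)}{(4\mu-\delta)^2}$ is in fact the correct one; the paper's displayed derivative has minor typos in both numerator and denominator, though its sign conclusion is unaffected.
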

\begin{proof}
If $H_\epsilon(P, Q) > \tau \delta$, then we can express $\hat{\delta} = \frac{1}{s} \sum_{i=1}^s \hat{\delta}_i$, where each $\hat{\delta}_i$ is an i.i.d. random variable in the range $[0, 1]$ with mean $\mu \geq \tau \delta$. By the Bhatia-Davis inequality, we have $\Var{\hat{\delta}_i} \leq (1 - \mu)\mu$. So $\mathbb{E}[\hat{\delta}_i^2] = \mathbb{E}[\hat{\delta}_i^2]^2 + \Var{\hat{\delta}_i} = \mu^2 + \Var{\hat{\delta}_i} \leq \mu$.

Now we can apply Bernstein's inequality to get that:

\[\Pr[\hat{\delta} < \mu - t] \leq \exp\left(-\frac{st^2 / 2}{ \mathbb{E}[\hat{\delta}_1^2] + t/3}\right) \leq \exp\left(-\frac{st^2 / 2}{ \mu + t/3}\right).\]

Setting $t = \mu - \delta$:

\[\Pr[\hat{\delta} < \delta] \leq \exp\left(-\frac{s(\mu - \delta)^2 / 2}{ 4\mu/3 - \delta/3}\right).\]

Since $\mu \geq \tau \delta > \delta$, we have:

\[\frac{d}{d\mu} \frac{s(\mu - \delta)^2 / 2}{ 4\mu/3 - \delta/3} = \frac{3s (\mu - \delta)(2\mu - \delta)}{(4\mu - 3\delta)^2} > 0,\]

i.e. the bound $\Pr[\hat{\delta} < \delta]$ is decreasing in $\mu$, so it is minimized by setting $\mu = \tau \delta$. Plugging this value of $\mu$ in gives as desired:

\[\Pr[\hat{\delta} < \delta] \leq \exp\left(-\frac{s(\tau-1)^2\delta}{ 8\tau/3-2/3}\right).\]
\end{proof}
\section{Balls-in-bins batching}

As we will see in the next section, applying Monte Carlo accounting to correlated noise mechanisms requires time linear in the number of possible participation patterns. For $n$-round DP-SGD with Poisson sampling, this is $2^n$. Shuffling reduces the number of possible participation patterns, and is generally considered a more practical variant, but is hard to do privacy analysis for because the participations of different examples are not independent like in Poisson sampling.

To alleviate the issues with both Poisson sampling and shuffling, we propose \textit{balls-in-bins batching}. Informally, balls-in-bins batching forms batches using the balls-in-bins process, where examples are balls and batches are bins:

\begin{definition}
In \textbf{balls-in-bins batching}, we form batches as follows. Let $b$ be the number of batches per epoch we wish to use and let $E$ be the number of epochs that we train.
Also denote by $n=b\cdot E$ the total number of training iterations.
For each example $d$, we independently include it in exactly one of $B_1, B_2, \ldots, B_b$ uniformly at random. We then iterate through the batches in round robin fashion, i.e. in iteration $i$ of our learning algorithm we use batch $B_{i \pmod b}$ (abusing notation to let e.g. $2b \pmod b = b$).
\end{definition}

\begin{figure}
    \centering
    \includegraphics[width=0.95\linewidth]{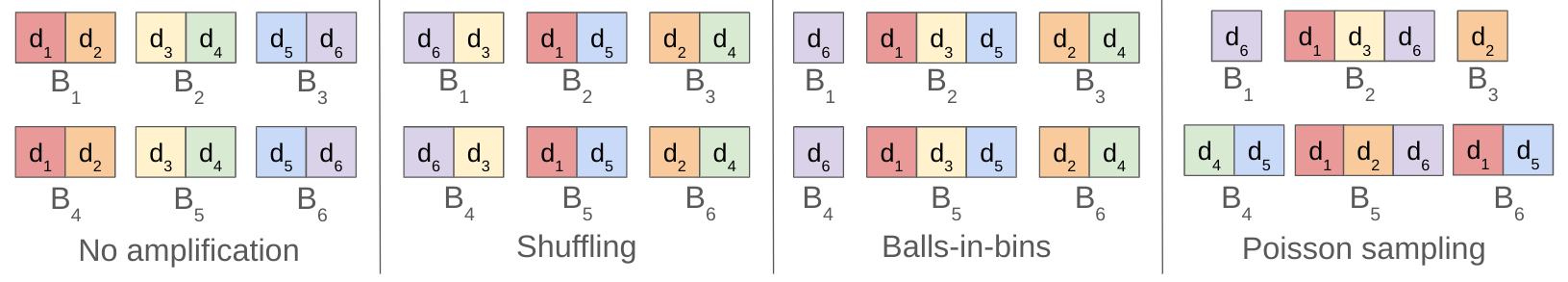}
    \caption{A comparison of the how different amplification methods might form batches. Here, we use $n = 6$ rounds, and form $b = 3$ batches per epoch for $E = 2$ epochs (visually represented as one row for each epoch). Note that shuffling uses fixed batch sizes, and all methods but Poisson sampling use the same batching across epochs and enforce exactly one participation per example per epoch.}
    \label{fig:amplificationmethods}
\end{figure}

In Figure~\ref{fig:amplificationmethods} we give a visualization of balls-in-bins batching and other standard methods for forming batches. The following lemma shows that the following pair of distributions is a dominating pair for the correlated noise mechanism $\bfx + \bfC^{-1} \bfz$ with balls-in-bins batching:
\begin{align*}
  P &= \frac{1}{b}\sum_{i=1}^{b} N(\bfm_i, \sigma^2 I), \text{ where } \bfm_i = \sum_{j=0}^{E-1} |\bfC|_{1:n, b\cdot j + i} \\
  Q &= N(0, \sigma^2 I),
\end{align*}
where $|\bfC|$ is the matrix whose entries are the absolute values of corresponding entry in $\bfC$.

\begin{lem}[Dimension-reduction for balls-in-bins batching]
\label{lem:bib-dim-red}
Suppose $\bfC \in \R^{n \times n}$ is a lower-triangular matrix with non-negative entries.
Then $P, Q$ (resp. $Q, P$) as defined above is a dominating pair for the correlated noise mechanism with balls-in-bins batching under the add (resp. remove) adjacency. 
\end{lem}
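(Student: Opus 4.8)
The plan is to obtain the dominating pair through a chain of reductions that strips away, in turn, the batching randomness, the adaptivity of the per-round gradients, and the $p$ feature coordinates of each round's output, leaving exactly the $n$-dimensional pair $(P,Q)$ in the statement. I would fix add-neighbors $D$ and $D' = D \cup \{d^*\}$ and bound $H_\alpha(\calM(D'),\calM(D))$ for every $\alpha \ge 0$. The first observation is that the balls-in-bins assignment of the records of $D$ has the same distribution under $\calM(D)$ and $\calM(D')$, so I couple these assignments and condition on one of them, $\pi$; moreover, given $\pi$ and given $d^*$'s bin $k\in\{1,\dots,b\}$ (uniform and independent), inserting $d^*$ adds its round-$i$ clipped gradient to row $i$ of the input matrix whenever $i\equiv k \pmod b$, so the output $\bfC\bfx + \bfz$ is shifted by $\bfC\Gamma_k$, where $\Gamma_k$ is supported on the rows $\{bj+k\}_{j=0}^{E-1}$. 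Hence (treating the gradients as fixed for the moment) $\calM(D)\mid\pi = N(\bfv_\pi,\sigma^2 I)$ while $\calM(D')\mid\pi = \tfrac1b\sum_{k=1}^b N(\bfv_\pi + \bfC\Gamma_k,\sigma^2 I)$ for a common mean $\bfv_\pi$, which translation removes.

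To handle adaptivity and to pin down the worst-case shift, I would invoke the standard reduction for adaptively chosen inputs to a lower-triangular matrix mechanism (\citet{denisov2022improved}): the change of variables $\bfM \mapsto \hat{\bfz}$ with $\hat{\bfz}_i = \bfM_i - \bfC_{i,1:i}\bfx_{1:i}$ turns the mechanism into a Gaussian location family, and the privacy loss is no larger than in the non-adaptive worst case in which $d^*$ contributes a single fixed unit vector $u$ in each of its participation rounds. Since $\bfC$ is \emph{non-negative} (so $|\bfC| = \bfC$), there is no cancellation and $\bfC\Gamma_k = \bfm_k u^\top$ with $\bfm_k = \sum_{j=0}^{E-1}\bfC_{1:n,\,bj+k}$ --- precisely the modes in the statement. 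All $b$ of these modes lie in the $n$-dimensional subspace $\{v u^\top : v\in\R^n\}$, on whose orthogonal complement $\calM(D')\mid\pi$ and $\calM(D)\mid\pi$ have identical marginals; factoring that common factor out of $H_\alpha$ and using that $v\mapsto v u^\top$ is a linear isometry with $\langle \bfm_k u^\top, \bfm_l u^\top\rangle = \langle \bfm_k, \bfm_l\rangle$, rotational invariance of isotropic Gaussians gives $H_\alpha(\calM(D')\mid\pi,\calM(D)\mid\pi) \le H_\alpha(P,Q)$. Joint convexity of the hockey-stick divergence then lets me average over $\pi$: $H_\alpha(\calM(D'),\calM(D)) \le \E_\pi H_\alpha(\calM(D')\mid\pi,\calM(D)\mid\pi) \le H_\alpha(P,Q)$. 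Since $\alpha$ and the neighbors were arbitrary, $(P,Q)$ dominates under the add adjacency; for the remove adjacency the two datasets exchange roles, the mixture appears in the second slot, and the same argument yields $(Q,P)$.

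I expect the adaptivity reduction to be the main obstacle. It has to be carried out for the \emph{whole} $n$-round mechanism at once --- round-by-round composition is exactly the lossy step this paper is avoiding --- and one must be careful that the rule generating $d^*$'s gradients may itself depend on $k$ through $d^*$'s earlier participations in a given scenario, so the ``fixed unit direction'' reduction has to be applied uniformly across the $b$ mixture components before the dimension collapse. Lower-triangularity of $\bfC$ and the $\bfM \mapsto \hat{\bfz}$ reparametrization are what make this possible, while non-negativity of $\bfC$ is what makes the surviving modes collapse to the $\bfm_k$ rather than to sums with cancellation. By comparison, the dimension reduction (factoring out the Gaussian on $u^\perp$) and the averaging over the batching via joint convexity are routine once the fixed-direction worst case is established.
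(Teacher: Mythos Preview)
Your plan follows the same skeleton as the paper's proof: first eliminate the other users' contributions, then show that the resulting adaptive mixture of Gaussians (indexed by $d^*$'s bin $k$) is dominated by the fixed $n$-dimensional pair $(P,Q)$. The difference is in how the second step is carried out. The paper invokes Lemma~4.5 of \citet{choquettechoo2024privacy} as a black box: that lemma says directly that a mixture $\sum_i p_i N(\bfc_i,\sigma^2 I)$ with \emph{adaptively} chosen $\bfc_i\in\R^{n\times p}$ satisfying the row-norm bounds $\ltwo{\bfc_i[j,:]}\le \bfc_i'(j)$ is dominated by the fixed $n$-dimensional mixture with modes $\bfc_i'$, so the triangle-inequality bound $\ltwo{(\bfC\Gamma_k)_{j,:}}\le \bfm_k(j)$ (using $\bfC\ge 0$) finishes the proof in one line. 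You instead unpack this into the Denisov et al.\ reparametrization plus a projection onto the rank-one subspace $\{vu^\top\}$. The step you correctly flag as the ``main obstacle''---carrying the adaptivity reduction \emph{uniformly over the $b$ mixture components}---is precisely the nontrivial content of Lemma~4.5; note that applying the single-mean Denisov et al.\ bound to each $k$ separately and then appealing to joint convexity only yields $H_\alpha(\tfrac1b\sum_k P_k^{\mathrm{adap}},Q)\le \tfrac1b\sum_k H_\alpha(P_k^{\mathrm{fixed}},Q)$, not the mixture-vs-mixture bound you need, so filling in that obstacle essentially amounts to reproving Lemma~4.5. Two minor points of alignment: your ``condition on $\pi$ and translate'' is exactly what the paper phrases as ``by post-processing, the other users contribute $0$'', and for the remove adjacency the paper appeals to the symmetry Lemma~29 of \citet{zhu22optimal} rather than rerunning the argument with the roles swapped.
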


\begin{proof}
We will use Lemma 4.5 of \cite{choquettechoo2024privacy}, restated here for convenience:

\begin{lem}
Let $\bfc_1, \ldots, \bfc_k \in \mathbb{R}^{n \times p}$. Let $\bfc_1', \ldots, \bfc_k' \in \mathbb{R}^n$ be such that $\ltwo{\bfc_i[j, :]} \leq \bfc_i'(j)$ for all $i, j$. Then letting

\[P = N(0, \sigma^2 \mathbb{I}_{(n \times p) \times (n \times p)}), Q = \sum_i p_i N(\bfc_i, \sigma^2 \mathbb{I}_{(n \times p) \times (n \times p)}) \]    
\[P' = N(0, \sigma^2 \mathbb{I}_{n \times n}), Q' = \sum_i p_i N(\bfc_i', \sigma^2 \mathbb{I}_{n \times n}),\]

for all $\alpha$ we have $H_\alpha(P, Q) \leq H_\alpha(P', Q')$. Furthermore, this holds even if the $j$th row of each $\bfc_i$ is chosen as a function of the first $j-1$ rows of $P, Q$ (subject to $\ltwo{\bfc_i[j, :]} \leq \bfc_i'(j)$) while $\bfc_i'$ remain fixed.
\end{lem}

We give the proof for the add adjacency. Proving $P, Q$ is a dominating pair for the add adjacency implies $Q, P$ is a dominating pair for the remove adjacency by Lemma 29 of~\cite{zhu22optimal}.

Because each user is assigned to their batch independently, we can assume without loss of generality that contributions from all users other than the differing user are always $0$. In more detail, by post-processing, we can assume that we release the contributions to the input matrix of all examples except the differing user's. Let $\bfx$ be these contributions, and $\bfx'$ be $\bfx$ plus the contribution of the differing user. Then distinguishing $\bfC \bfx + \bfz$ and $\bfC \bfx' + \bfz$ is equivalent to distinguishing $\bfz$ and $\bfC (\bfx' - \bfx) + \bfz$, exactly the setting where all users except the differing user only contribute $0$.

Now, for the input matrix $\bfx \in \R^{n \times p}$, the rows have at most unit norm in the entries where the differing user participates and 0 everywhere else.
If the differing user is assigned to the $i$th batch, it is immediate by triangle inequality that
$
  \norm{(\bfC \bfx)_{k, :}}
  \leq \sum_{j=0}^{E-1}|\bfC|_{k, b\cdot j + i}
$. 
Because every user participates in each batch with probability $\frac{1}{b}$,
we can apply Lemma 4.5 of \citet{choquettechoo2024privacy} to exactly get that
the correlated noise mechanism with balls-with-bins batching is dominated by the pair $P, Q$.
\end{proof}

For simplicity of presentation, we will focus on the add adjacency in the rest of the paper. It is easy to extend results to the remove adjacency, and for the add-and-remove adjacency we can simply take the worse of the two adjacencies' privacy guarantees (paying a factor of 2 in the failure probability of \cref{thm:bernstein} by a union bound). We also remark that~\cref{lem:bib-dim-red} (and any results in our paper for the add/remove adjacency) can be readily extended to the replace adjacency using recent results of \citet{schuchardt2024unified}.

\mypar{Advantages of balls-in-bins batching} In addition to being more amenable for Monte Carlo analysis than shuffling or Poisson sampling, we believe balls-in-bins batching may be more practical than Poisson sampling for various reasons. First, one way to implement balls-in-bins batching to form $b$ batches is as follows: We sample $(c_1, c_2, \ldots, c_b)$ from the multinomial random variable with $|D|$ trials and $b$ equally likely outcomes. We shuffle the dataset once, and then we operate in multiple streaming passes over the shuffled dataset. In iteration $i$, we take the next $c_{i \pmod b}$ examples from the dataset stream.

Hence, up to the choice of variable batch size (which we will discuss how to make practical in Section~\ref{sec:practical}), balls-in-bins batching is no harder to implement than shuffling, which is generally considered to be a practical batch sampling method. As a function of requiring little overhead compared to shuffling, balls-in-bins batching avoids some of the practical issues with Poisson sampling. For example, shuffling and balls-in-bins are both memory-efficient if implemented in an offline manner as the shuffled dataset is the same size as the original dataset. In contrast, if we implement Poisson sampling in an offline manner, to store all batches formed by Poisson sampling we need to write each example $B n /|D|$ times in expectation, which can be much larger than 1. 

Shuffling and balls-in-bins are also guaranteed to give strict privacy improvements over deterministic multi-epoch training because they are sampling from a set of multi-epoch participation patterns. In contrast, as \citet{chua2024private} demonstrated, in some settings the privacy guarantees of Poisson sampling can actually be worse than deterministic batching, since Poisson sampling risks over-sampling some examples. For the same reasons, shuffling and balls-in-bins  batching are more secure in that they remain no less private than deterministic multi-epoch batching even if the randomness in the batches were somehow leaked (e.g. through a PRNG attack, or through monitoring communications during distributed training) whereas Poisson sampling can make the privacy much worse than deterministic multi-pass batching if the batches are leaked. 

\section{Monte Carlo Accounting for Correlated Noise Mechanisms}\label{sec:mcmm}

\subsection{Calibrating $\sigma$}

Now, suppose we are given a lower triangular matrix $\bfC \in \R^{n \times n}$ with non-negative entries and we aim to find the minimum noise-multiplier $\sigma$ so that the balls-in-bins mechanism is $(\eps, \delta')$-DP.
This $\delta'$ will be slightly smaller than $\delta$ so that Algorithm~\ref{alg:wrapper} is $(\epsilon, \delta)$-DP and outputs $\bot$ with very small probability.
We approach this by drawing a sample of the dominating PLD and employing a bisection root-finding algorithm on $\sigma$.

To do this, we first recall the dominating distributions,
\[ P_{\bfC,\sigma} = \frac{1}{b}\sum_{i=1}^{b} N(\bfm_i, \sigma^2 I), \text{ where } \bfm_i = \sum_{j=0}^{E-1} |\bfC|_{1:n, b\cdot j + i}, \qquad Q_{\sigma} = N(0, \sigma^2 I).\]

Hence we can sample $Y \sim PLD(P_{\bfC, \sigma}, Q_\sigma)$ by first sampling $X \sim P_{\bfC,\sigma}$ and computing $Y = \log \left(\frac{P_{\bfC,\sigma}}{Q_\sigma}(X)\right)$. Note that computing $Y$ is efficient since $P_{\bfC,\sigma}$ has only $b$ modes.
To make the sampling independent of $\sigma$, we can instead sample $i\sim Uni([b])$ and $Z \sim N(0, I)$ then for any $\sigma$ compute $Y = \log \left(\frac{P_{\bfC,\sigma}}{Q_\sigma}(\bfm_i + \sigma \cdot Z)\right)$.
This process is summarized in Algorithm~\ref{alg:sigma}, which uses the function  $\hat{\delta}$ to estimate the $\delta$ at a fixed $\sigma$.
The inner computation of $\hat{\delta}$ is easily parallelizable, as it is a simple average over a given sample size.
On top of this, a bisection algorithm is run, which can converge\footnote{To show bisection search converges, if $\sigma_{max}$ is the $\sigma$ achieving the target DP guarantee without amplification (which we easily can compute using existing accounting libraries), $\hat{\delta}(0; \ldots) = 1 > \delta$ and $\hat{\delta}(\sigma_{max}; \ldots) < \delta$ (whp). In turn, $\hat{\delta}(\sigma; \ldots) = \delta$ for some $\sigma \in [0, \sigma_{max}]$ and bisection converges to this point by continuity of $\hat{\delta}$.} to within $\Delta$ additive error in only $O(\log(1/\Delta))$ many iterations because $\sigma$ is a scalar. 

\begin{algorithm}[htb]
\caption{Finding $\sigma$} \label{alg:sigma}
 \hspace*{\algorithmicindent} \textbf{Inputs:} Target $(\epsilon, \delta)$-DP guarantee, sample size $m$, matrix $\bfC$.
\begin{algorithmic}[1]
\State $i_1\dots i_m \simiid Uni([b])$ \State $Z_1, \dots, Z_m \simiid N(0, I)$
\State Define $Y_j(\sigma) := \log\left(\frac{P_{\bfC,\sigma}}{Q_\sigma}(\bfm_{i_j}+\sigma\cdot Z_j)\right)$
\State Define the function $\hat{\delta}$ as $\hat{\delta}(\sigma; \bfC, \eps, i_{1:m}, Z_{1:m}) = \frac{1}{m} \sum_{j=1}^m (1-\exp(\eps-Y_j(\sigma)))_+$
\State $\sigma^\star \leftarrow$ solution to $\hat{\delta}(\sigma; \bfC, \eps, i_{1:m}, Z_{1:m})=\delta$ obtained by some 1-d bisection algorithm \\
\Return $\sigma^\star$
\end{algorithmic}
\end{algorithm}

\subsection{Optimizing over matrices}

In the previous section, we outlined the procedure Algorithm~\ref{alg:sigma} for estimating the noise multiplier $\sigma$ for a given matrix $\bfC$. 
We now show that we can optimize $\bfC$ for a given utility metric that is a function of $\bfC, \sigma$. We will choose the utility metric to be the RMSE (root mean squared error) of prefix sums of $\bfx$ achieved by $\bfx + \bfC^{-1} \bfz$, which was also the metric optimized by \citet{choquette2022multi, choquette2024amplified}. The RMSE is given by $\sigma \cdot \norm{\bfA^{-1}\bfC}$, where $\norm{\cdot}$ is the Frobenius norm and $\bfA$ is the all-ones lower-triangular matrix. While we focus on RMSE in this paper, our optimization framework is easily extended to any differentiable function of $\bfC$ and $\sigma$.

Algorithm~\ref{alg:sigma} computes $\sigma$ (as a function of $\bfC$) using bisection, hence we cannot e.g. apply automatic differentiation to find partial derivatives of $\sigma$. However, via implicit differentiation we can still obtain gradients of the RMSE with respect to $\bfC$ using only quantities computable via automatic differentiation. By chain rule:

\[\frac{\partial (\sigma \cdot \norm{\bfA^{-1}\bfC})}{\partial \bfC} = \sigma \cdot \frac{\partial \norm{\bfA^{-1}\bfC}}{\partial \bfC} + \norm{\bfA^{-1}\bfC} \cdot \frac{\partial \sigma}{\partial \bfC}.\]

Let $\hat{\delta}$ denote the function as defined in Algorithm~\ref{alg:sigma}, i.e. the function that takes a given $\epsilon, \sigma, \bfC$ and set of Monte Carlo samples, and outputs the corresponding estimated $\delta$ value. Given a set of Monte Carlo samples, we want to take a gradient step while preserving $\hat{\delta} = \delta$. Taking derivative of both sides of this constraint with respect to $\bfC$:

\begin{align*}
\frac{\partial \hat{\delta}}{\partial \bfC} + \frac{d \hat{\delta}}{d\sigma} \cdot \frac{\partial \sigma}{\partial \bfC} &= 0
\Longrightarrow \frac{\partial \sigma}{\partial \bfC} = - \left(\frac{\partial \hat{\delta}}{\partial \bfC}\right) / \left(\frac{d \hat{\delta}}{d\sigma} \right)
\end{align*}

So to optimize the RMSE w.r.t. $\bfC$ we can now use Algorithm~\ref{alg:sigma} to find $\sigma$ for a given $\bfC$, and then apply gradient descent using the gradient

\[\frac{\partial (\sigma \cdot \norm{\bfA^{-1}\bfC})}{\partial \bfC} =  \sigma \cdot \frac{\partial \norm{\bfA^{-1}\bfC}}{\partial \bfC} - \norm{\bfA^{-1}\bfC}\left(\frac{\partial \hat{\delta}}{\partial \bfC}\right) / \left(\frac{d \hat{\delta}}{d\sigma} \right).\]

Since $\hat{\delta}$ is computed using only differentiable functions of $\bfC$ and $\sigma$ (i.e., not via bisection), all partial derivatives in the above expression can be evaluated at the given $\bfC, \sigma$ pair using automatic differentiation. We then simply plug these gradients into an optimization algorithm; for simplicity, we use gradient descent. Finally, for efficiency, we choose to restrict to Toeplitz $\bfC$ to optimize over a lower dimensional space.
\section{Experiments}

We implement Algorithm~\ref{alg:sigma} for calibrating a noise multiplier for a given mechanism under balls-in-bins batching. In this section, we look at both the problem of minimizing RMSE and a deep learning setting, and compare results enabled by our accounting and optimization routines to past work.

For all our empirical results, we use $\delta = 8 \cdot 10^{-6}, \tau = 1.25, s = 10^8$. For this set of values, Theorem~\ref{thm:bernstein} gives a failure probability less than $7.2 \cdot 10^{-9}$. We use Algorithm~\ref{alg:wrapper} to verify the DP guarantee for both the add and remove adjacencies, but by a union bound the failure probability increases to at most $1.5 \cdot 10^{-8}$ once accounting for both adjacencies, much smaller than $\tau \delta = 10^{-5}$. So, we can apply Theorem~\ref{thm:wrapper} to show $(\epsilon, 10^{-5})$-DP for all our empirical results.

\newcommand{\defaultgraphics}[1]{\includegraphics[width=\textwidth]{#1}}

\subsection{RMSE Results}

\subsubsection{Analysis of the amplification schemes}

We first focus on the RMSE metric $\sigma \cdot \|\bfA^{-1} \bfC\|$. We compare our balls-in-bins batching analysis, Poisson sampling (using the analysis of \citep{choquette2024amplified} for $b$-banded matrices where $b > 1$, and standard PLD accounting for DP-SGD), and no amplification (i.e., we do multiple passes using a fixed batch size over a dataset given in arbitrary order). To simplify presentation in this section we will restrict to the setting with 2048 total iterations and 16 epochs, i.e. 128 iterations per epoch or $B = |D|/128$. In Appendix~\ref{sec:other-settings}, we provide analogous results for other settings of the number of epochs and number of iterations per epoch; while the exact numbers differ in other settings, the high-level conclusions hold in all settings we tested.

\mypar{Amplification of DP-SGD} In \cref{fig:improvement-on-dpsgd}, we plot the percentage improvement in RMSE due to balls-in-bins and Poisson sampling over the unamplified RMSE of DP-SGD, i.e. $\bfC = \bfI$.

\begin{figure}
\centering
\begin{subfigure}{.45\textwidth}
\includegraphics[width=0.95\linewidth]{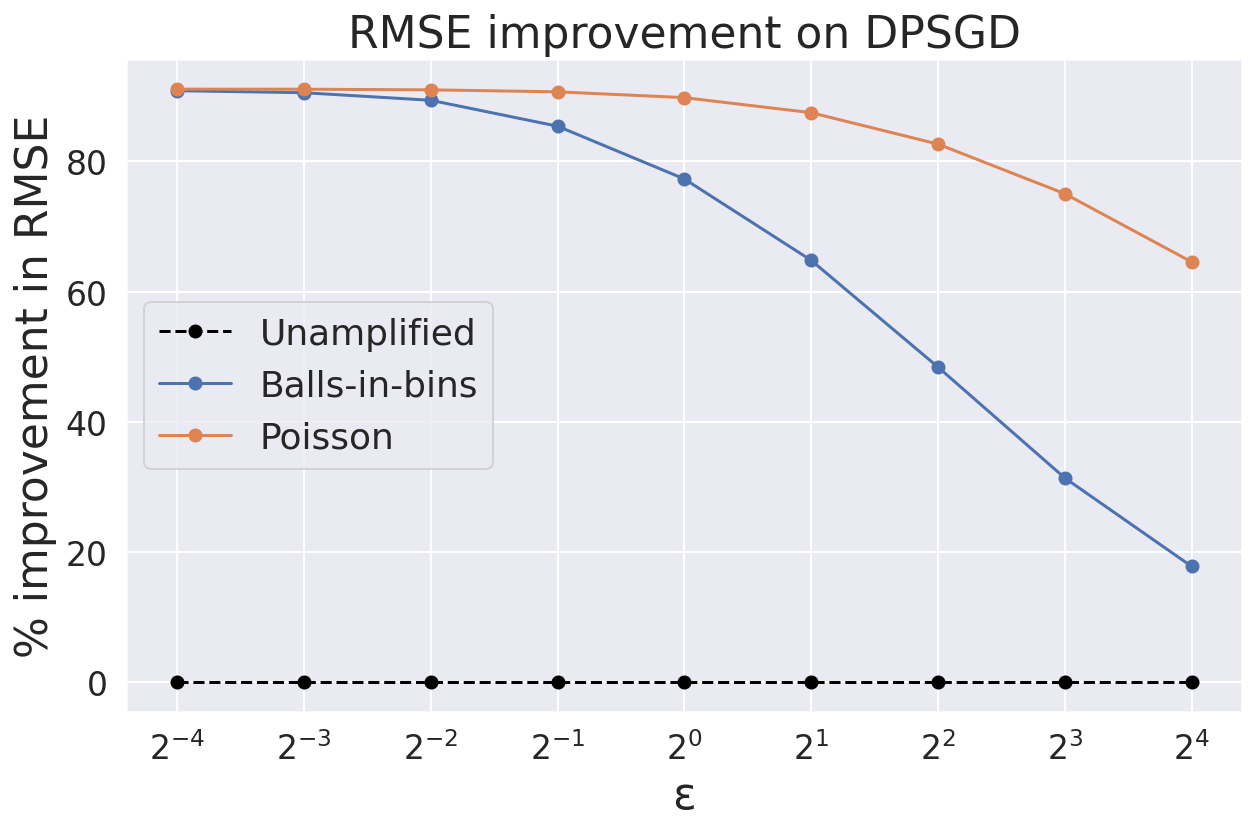}
\caption{Improvement in RMSE over unamplified DP-SGD due to different amplification schemes.}
\label{fig:improvement-on-dpsgd}
\end{subfigure}
\begin{subfigure}{.05\textwidth}
\end{subfigure}
\begin{subfigure}{.45\textwidth}
\includegraphics[width=0.95\linewidth]{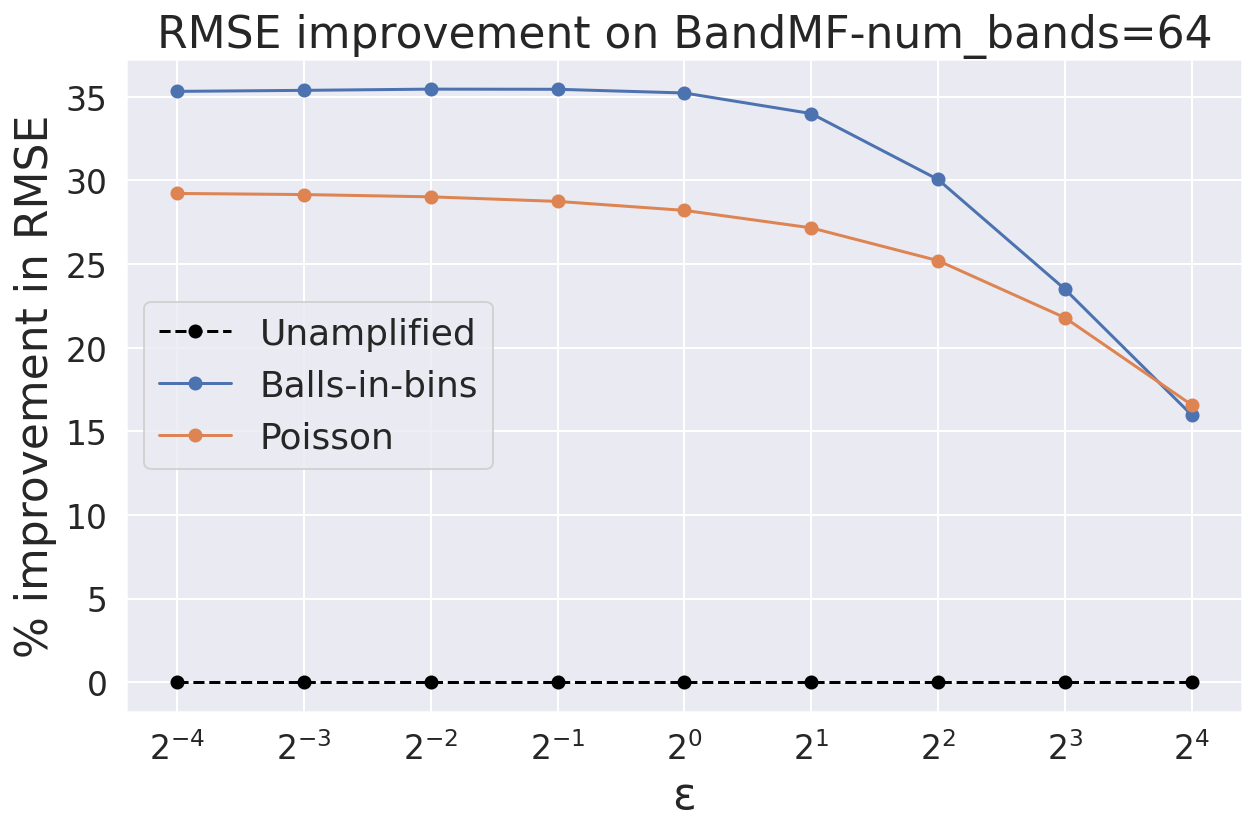}
\caption{Improvement in RMSE over unamplified 64-banded matrix factorization due to different amplification schemes.}
\label{fig:improvement-on-bandmf}
\end{subfigure}
\caption{Comparisons using fixed $\bfC$.}
\end{figure}

We observe that the improvement from the two sampling schemes is similar for smaller $\eps$ and Poisson performs substantially better than balls-in-bins for larger $\epsilon$. Taking balls-in-bins as an analog for shuffling, this is in line with the results of \citet{chua2024private}.

\mypar{Amplification of banded matrix factorization} 
We next consider non-identity choices of $\bfC$. In particular, we look at $b$-banded matrices where $b > 1$. Recall that a weakness of the sampling scheme of \citet{choquette2024amplified} is that the amount of randomness in the sampling scheme, i.e. the degree of benefit from amplification, decreases as the number of bands $b$ in the matrix increases. In contrast, the randomness in sampling from balls-in-bins is independent of the matrix structure. We then may expect that balls-in-bins amplification outperforms the amplification of \citet{choquette2024amplified}. 

We verify this in~\cref{fig:improvement-on-bandmf}: We again plot the percentage reduction in RMSE due to amplification but for a 64-banded matrix instead of $\bfC = \bfI$. As predicted, due to using a higher-randomness sampling scheme, balls-in-bins consistently outperforms Poisson sampling until $\eps=16$.  

While the previous experiment shows that balls-in-bins is preferable when using a fixed large number of bands, a better strategy is to choose the number of bands $b$ to minimize the RMSE rather than fix $b$ in advance. In~\cref{fig:improvement-on-variable-bandmf}, we reproduce Figure~\ref{fig:improvement-on-bandmf} but instead of fixing the number of bands $b$ in advance, we sweep $b \in \{1, 2, 4, 8, 16, 32, 64, 128, 256\}$ and for balls-in-bins batching pick the value of $b$ which minimizes the RMSE. For banded matrices and Poisson sampling, we do the same but restrict to $b \leq 64$, i.e. the regime where \citet{choquette2024amplified}'s result gives a non-zero amount of amplification. For the unamplified baseline, there is no benefit to reducing the number of bands $b$ so we pick $b = 256$.

In Figure~\ref{fig:rmse_vs_prob} we repeat this comparison but fixing $\varepsilon$ and varying the sampling probability / iterations per epoch instead. We fix 128 rounds of training, and fix a target DP guarantee of $(1, 10^{-5})$-DP. We vary the sampling probability $p$ and compare the RMSE achieved by the best banded matrix with either balls-in-bins sampling + our amplification analysis or the sampling scheme and analysis of \citet{choquette2024amplified}. For balls-in-bins, we treat $1 / b$ where $b$ is the iterations per epoch as the sampling probability for the purpose of plotting, i.e. at each point on the x-axis both algorithms have the same expected batch size.

\begin{figure}[h]
    \centering
    \begin{subfigure}{.45\textwidth}
    \includegraphics[width=0.95\linewidth]{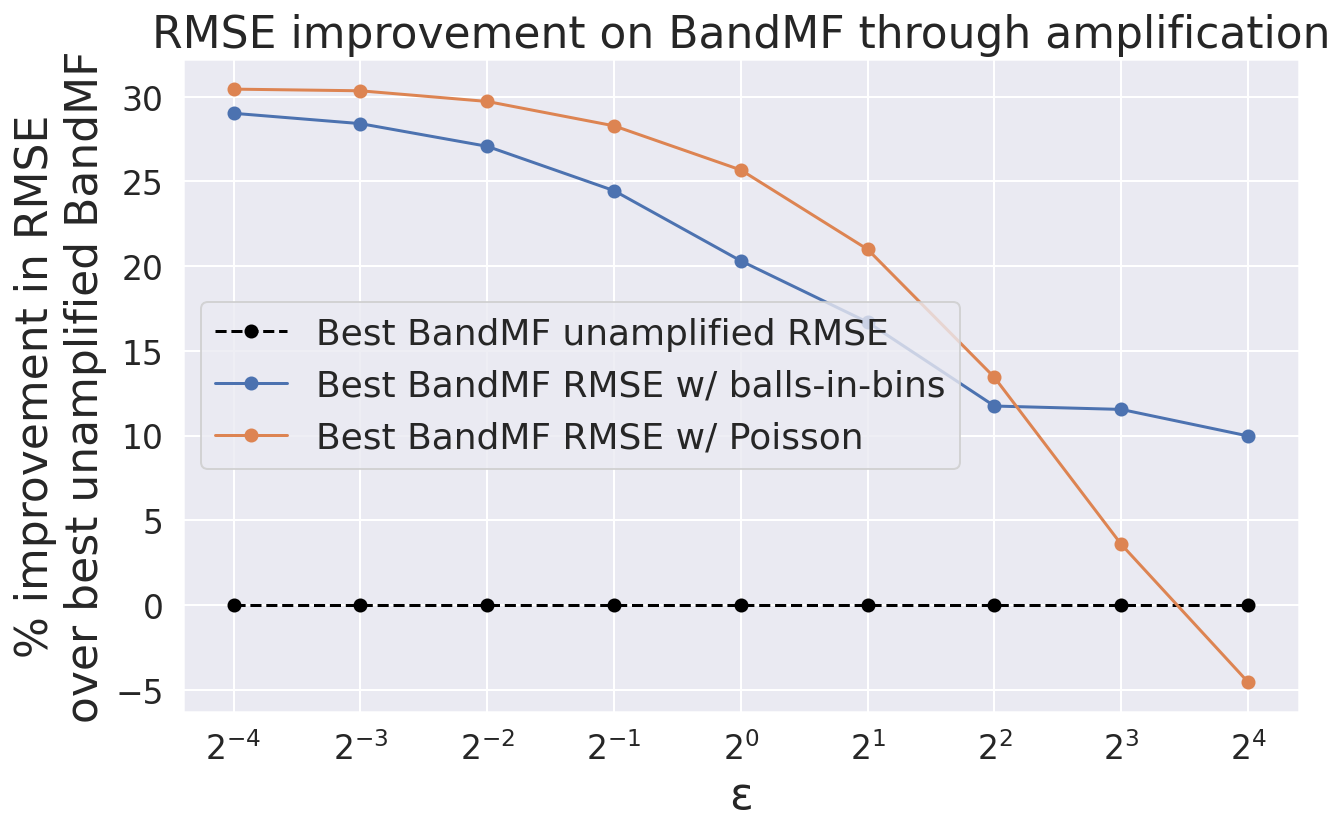}
    \caption{Improvement in RMSE over unamplified $b$-banded matrix factorization due to different amplification schemes. All curves optimize the choice of $b$ at each point.}
    \label{fig:improvement-on-variable-bandmf}
    \end{subfigure}
    \begin{subfigure}{.05\textwidth}
    \end{subfigure}
    \begin{subfigure}{.45\textwidth}
    \includegraphics[width=0.95\linewidth]{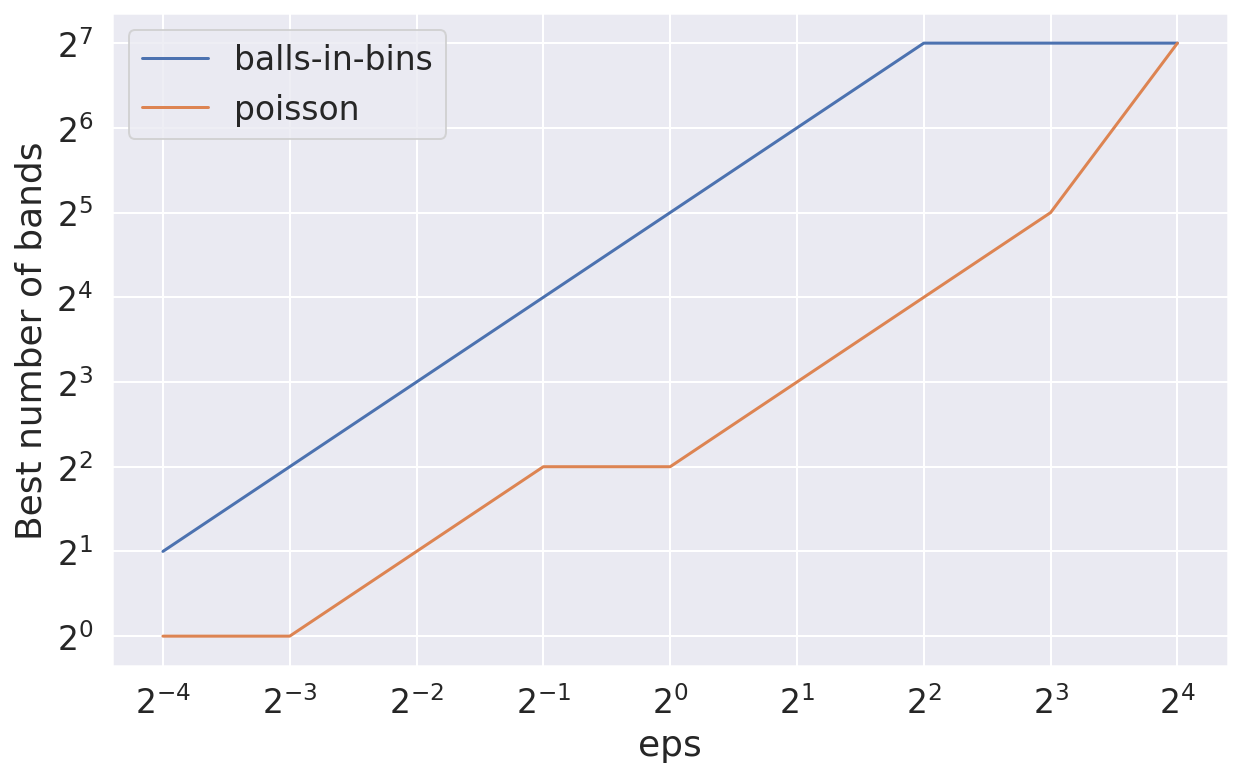}
    \caption{For each of the amplification methods in Figure~\ref{fig:improvement-on-variable-bandmf}, the best choice of $b$ at each value of $\varepsilon$.}
    \label{fig:best_band_choice}
    \end{subfigure}
    \begin{subfigure}{.45\textwidth}
    \includegraphics[width=0.95\linewidth]{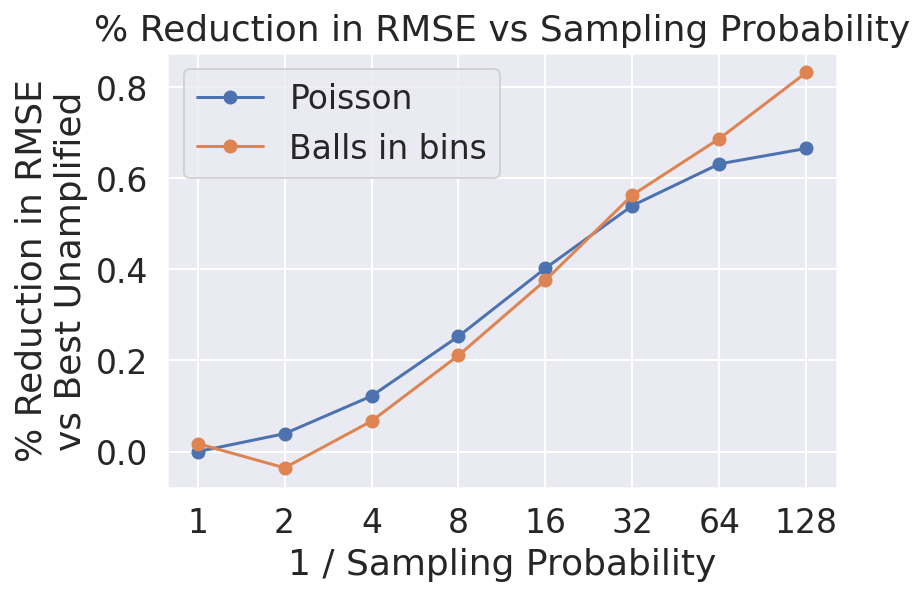}
    \caption{Comparison of improvement due to amplification as a function of sampling probability.}
    \label{fig:rmse_vs_prob}
    \end{subfigure}
    \begin{subfigure}{.05\textwidth}
    \end{subfigure}
    \begin{subfigure}{.45\textwidth}
    \includegraphics[width=.95\linewidth]{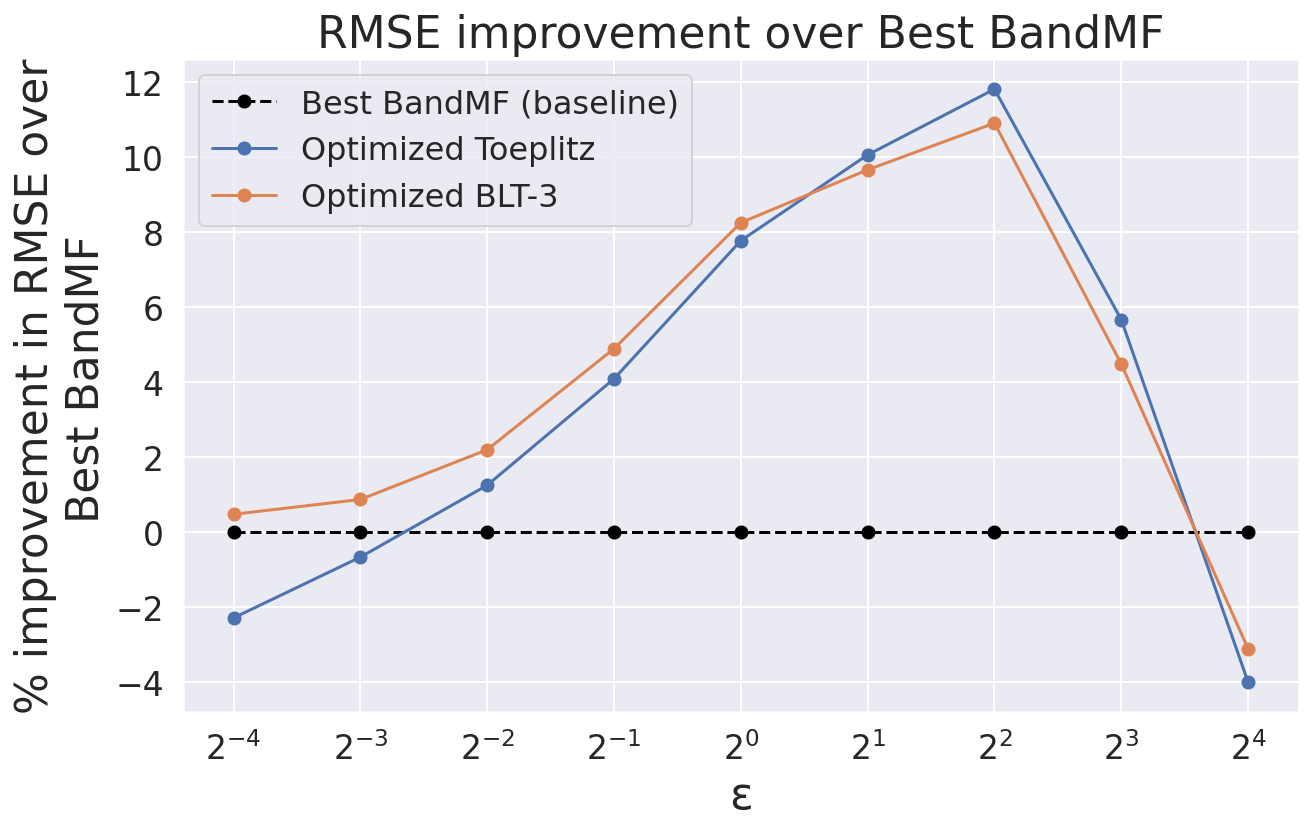}
    \caption{Improvement from optimizing $\bfC$ using objective accounting for amplification.}
    \label{fig:rmse_over_amplified_bandmf}
    \end{subfigure}
    \caption{Comparisons for variable $\bfC$.}
\end{figure}

For $\eps \leq 4$, the best RMSE under Poisson sampling slightly outperforms balls-in-bins batching and for $\eps \geq 8$, we see a deterioration in the RMSE improvements from Poisson. At $\epsilon = 16$, we see that it is better to forgo the amplification scheme of \citep{choquette2024amplified} and use the unamplified baseline. In contrast, because the distinguishing problem for balls-in-bins for a fixed $\bfC$ is always strictly harder than the unamplified baseline, the best $\bfC$ combined with balls-in-bins batching always strictly improves over the unamplified baseline. 

\subsubsection{Improvements due to optimization}

We now look at the benefits of using our optimization procedure to choose $\bfC$. We compare two variants of our optimization scheme, one which optimizes over general Toeplitz matrices, and another which optimizes over the BLT matrix family defined by \citet{dvijotham2024efficientnearoptimalnoisegeneration}. We use the 3-buffer variant of BLTs, i.e. they can be specified using 6 parameters and only require an overhead of 3 times the model size to store the state required for correlated noise generation. In Figure~\ref{fig:rmse_over_amplified_bandmf} we plot the improvement in RMSE for both these variants relative to the best $b$-banded $\bfC$ amplified using the sampling scheme and analysis of \citet{choquette2024amplified}. 

We observe that for most $\epsilon$ values, our optimization procedure for $\bfC$ can give a reduction in RMSE compared to the previous state-of-the-art of \citet{choquette2024amplified}. For extreme values of $\epsilon$, the banded baseline still outperforms our optimized matrices. We note that despite BLT matrices being a subset of Toeplitz matrices, our optimized BLTs matrices sometimes outperform our optimized Toeplitz matrices. We believe this is because BLTs lie in a lower-dimensional space which may make it easier to optimize over them, i.e. the suboptimality of our solution to the optimization problem over BLTs may be smaller than the suboptimality of our solution to the problem over Toeplitz matrices.

\subsubsection{Timing Analysis}

In this section, to understand the scalability of our optimization procedure, we analyze how long it takes to optimize for the matrix $\bfC$ as well and how it depends on the various parameters.

In Figure~\ref{fig:timing-table}, we give the time, in seconds, for completing 300 iterations of gradient descent while varying the number of epochs, iterations per epoch, and number of samples per gradient descent iteration. We use a v100 GPU to perform the gradient steps. Here we fix $\delta = 10^{-5}$ and $\eps=1$. We make the following observations:
\begin{itemize}
    \item Increasing the number of samples multiplicatively by 8 does not cause the runtime to increase by the same amount.
    \item On average, the runtime seems to have a sublinear dependence in the number of epochs.
    \item In contrast, in most settings for a large enough number of iterations per epoch the runtime seems to grow linearly or worse in the iterations per epoch.
\end{itemize}

For the largest number of samples, epochs, and iterations per epoch we ran into memory issues. In the next section, we show that using a very small number of samples in the optimization loop still yields good RMSE, i.e. these memory issues can generally be avoided. Combined with these observations, we believe this is evidence the optimization procedure is scalable in the number of epochs, although it may be infeasible to run the procedure for a large number of iterations per epoch.

\begin{figure}
    \centering
\begin{tabular}{cllll}
\toprule
\multicolumn{5}{c}{Number of Samples = 131072} \\
\hline
\multicolumn{1}{l}{} & \multicolumn{1}{l|}{} & \multicolumn{3}{c}{Number of Epochs} \\
\cline{3-5} 
\multicolumn{1}{l}{} & \multicolumn{1}{l|}{}
  & 16 & 32 & 64 \\
\hline
\multicolumn{1}{c|}{\multirow{3}{*}{\begin{tabular}[c]{@{}c@{}}
Iterations\\ per epoch\end{tabular}}}
& \multicolumn{1}{l|}{32}  & 376.88 & 384.44 & 311.55 \\
\multicolumn{1}{c|}{} &
\multicolumn{1}{l|}{64}  & 383.19 & 323.56 & 658.95 \\
\multicolumn{1}{c|}{}
& \multicolumn{1}{l|}{128} & 389.37 & 770.18 & 3594.55  \\
\bottomrule
\end{tabular}
\begin{tabular}{cllll}
\toprule
\multicolumn{5}{c}{Number of Samples = 1048576} \\
\hline
\multicolumn{1}{l}{} & \multicolumn{1}{l|}{} & \multicolumn{3}{c}{Number of Epochs} \\
\cline{3-5} 
\multicolumn{1}{l}{} & \multicolumn{1}{l|}{}
  & 16 & 32 & 64 \\
\hline
\multicolumn{1}{c|}{\multirow{3}{*}{\begin{tabular}[c]{@{}c@{}}
Iterations\\ per epoch\end{tabular}}}
& \multicolumn{1}{l|}{32}  & 374.33 & 394.50 & 468.48 \\
\multicolumn{1}{c|}{} &
\multicolumn{1}{l|}{64}  & 525.25 & 664.23 & 804.56 \\
\multicolumn{1}{c|}{}
& \multicolumn{1}{l|}{128} & 1060.45 & 1435.47 & Out of memory \\
\bottomrule
\end{tabular}
\caption{Time (in seconds) to optimize $\bfC$ for different values of Monte Carlo samples used per gradient descent iteration, number of epochs, and iterations per epoch.}
\label{fig:timing-table}
\end{figure}

\subsubsection{Effect of the number of samples}
When optimizing over the matrix $\bfC$, we use a Monte Carlo sampler to estimate $\sigma$. However, for privacy purposes we only need the estimate of $\sigma$ to be accurate for the final $\bfC$ output by the optimization procedure, since that is the only correlation matrix we will actually use during model training. For intermediate values of $\bfC$ in the gradient descent procedure, it is not a privacy violation to compute inaccurate $\sigma$ values for these matrices.

In turn, to make the optimization procedure more efficient, we can use a smaller number of samples per iteration. A natural question is then if the number of samples needed for the Monte Carlo estimator to concentrate is comparable to the number of samples needed for the optimization to achieve reasonable RMSE.

We run our optimization procedure for a varying number of samples per iteration, and in \cref{fig:optimization-samples} for each of these numbers we plot the suboptimality of the final $\bfC$ value in terms of RMSE achieved compared to using the maximum number of samples. We vary $\epsilon$ and fix $\delta = 10^{-5}$. For this choice of $\delta$, we need at least $1/\delta \approx 2^{17}$ samples are needed for the Monte Carlo estimator of $\delta$ to guarantee an error smaller than $\delta$ with, say, constant probability. However, in Figure~\ref{fig:optimization-samples}, we see that for $\epsilon = 0.5, 1, 2$ using e.g. $\approx 2^9$ samples per iteration suffices for achieving similar RMSE as $2^{20}$ samples per iteration, and at $\epsilon = 4$ this number of samples only leads to a $\approx 10\%$ increase in RMSE. In other words, our optimization procedure can use a much smaller number of samples (compared to the final verification of $\bfC$ and $\sigma$) at little to no cost in utility.

\begin{figure}[h]
    \centering
    \includegraphics[width=0.95\textwidth]{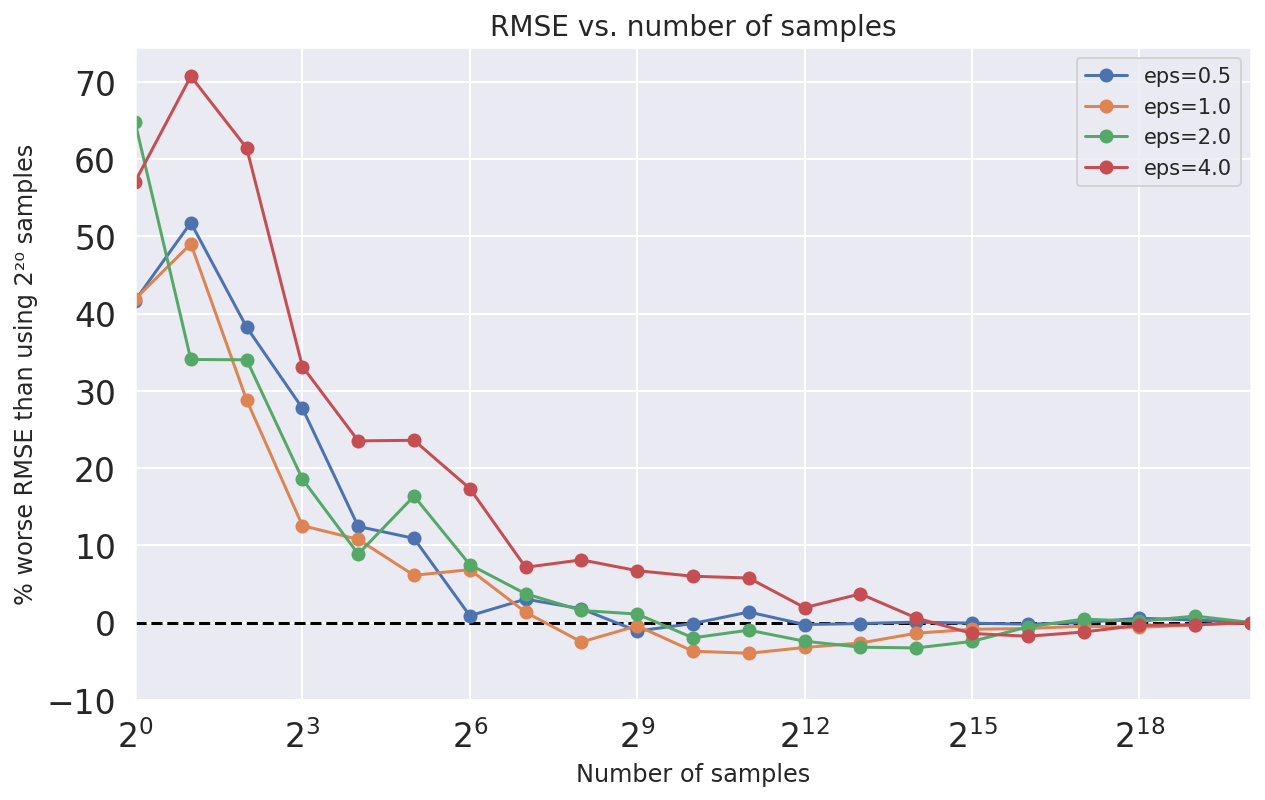}
    \caption{Percentage increase in RMSE due to using a smaller number of samples per iteration in the optimization procedure, when compared to the RMSE achieved by using $2^{20}$ samples per iteration.}
    \label{fig:optimization-samples}
\end{figure}

\subsection{CIFAR Results}\label{sec:cifar}

We next compare different choices of the correlation matrix $\bfC$ and amplification schemes in a deep learning setting. We replicate the CIFAR10 image recognition setting considered by  \citep{choquette2024amplified}: we also use the same VGG model, 20 epochs of 100 iterations with batch size 500, and momentum of 0.95 and a learning rate cooldown from $\eta$ to $\eta/20$ across iterations 500 to 2000. We vary $\epsilon \in \{0.5, 1.0, 2.0, 4.0, 8.0\}$ and fix $\delta = 10^{-5} < 1/|D|$. We fix the clip norm to 1.0 and tune the learning rate separately for each combination of $\epsilon$ and correlation matrix/amplification method, and then report the average of 100 training runs for each combination. The different correlation matrix choices and amplification methods we consider are:

\begin{itemize}
    \item Our baseline is the SOTA of banded matrices of \citet{choquette2024amplified} using their Poisson sampling-based amplification scheme. We also consider banded matrices and balls-in-bins batching. Following \citet{choquette2024amplified}, for each choice of the number of bands $b$, we optimize $\bfC$ without accounting for amplification. Then for each amplification method, we use the choice of $b$ in each setting of $\epsilon$ that gives the lowest RMSE under amplification.
    \begin{itemize}
        \item We do not separately consider multi-epoch MEMF of \citet{choquette2022multi} or DP-SGD with Poisson sampling as they are subsumed by this approach.
    \end{itemize}
    \item BLT matrices of \citet{dvijotham2024efficientnearoptimalnoisegeneration}. We restrict to 4 buffers and use our optimization framework to optimize $\bfC$ for the RMSE under balls-in-bins batching for each value of $\epsilon$.
    \item General Toeplitz lower-triangular matrices. Again we use our optimization framework to optimize $\bfC$ for the RMSE under balls-in-bins batching for each value of $\epsilon$.
\end{itemize}

For all methods, in the implementation we shuffle the dataset and use a fixed batch size, but calculate the noise multiplier assuming the corresponding amplification method was properly implemented. 
This is a standard practice for simplifying implementations in the literature (e.g., \citet{choquette2024amplified}). In the next section we discuss the pitfalls of this practice, and do a training run where we actually implement a practical variant of balls-in-bins batching that also uses fixed batch sizes for gradient computations.
\begin{figure}
    \centering
    \includegraphics[width=0.95\textwidth]{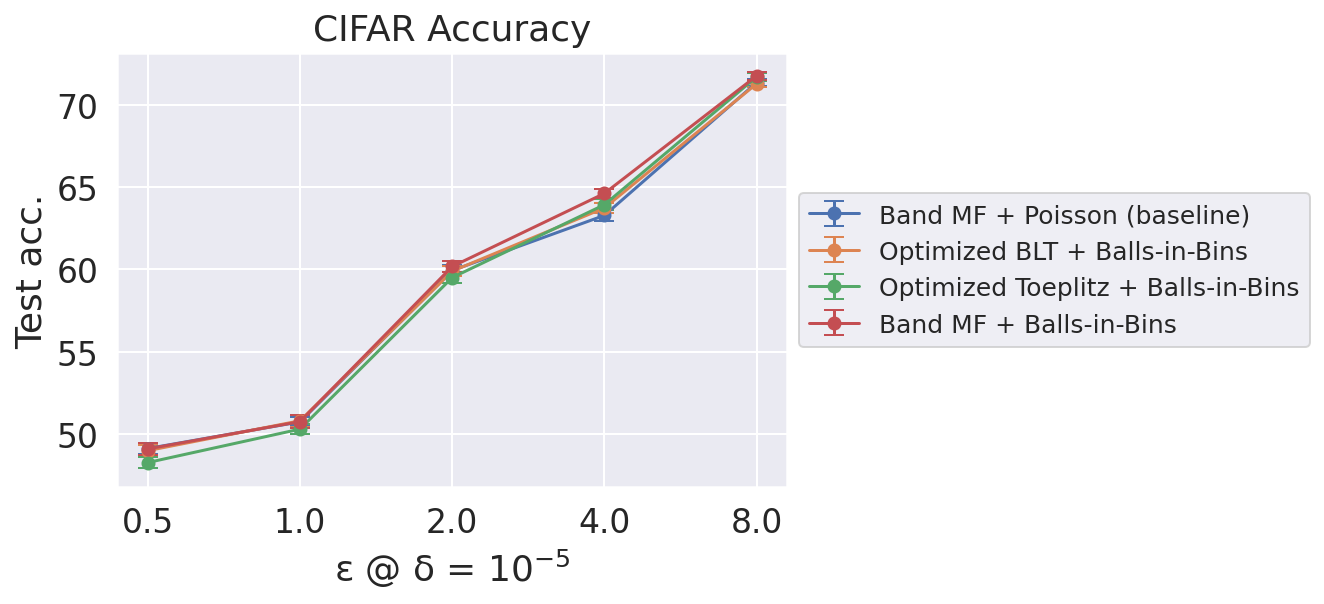}
    \includegraphics[width=0.95\linewidth]{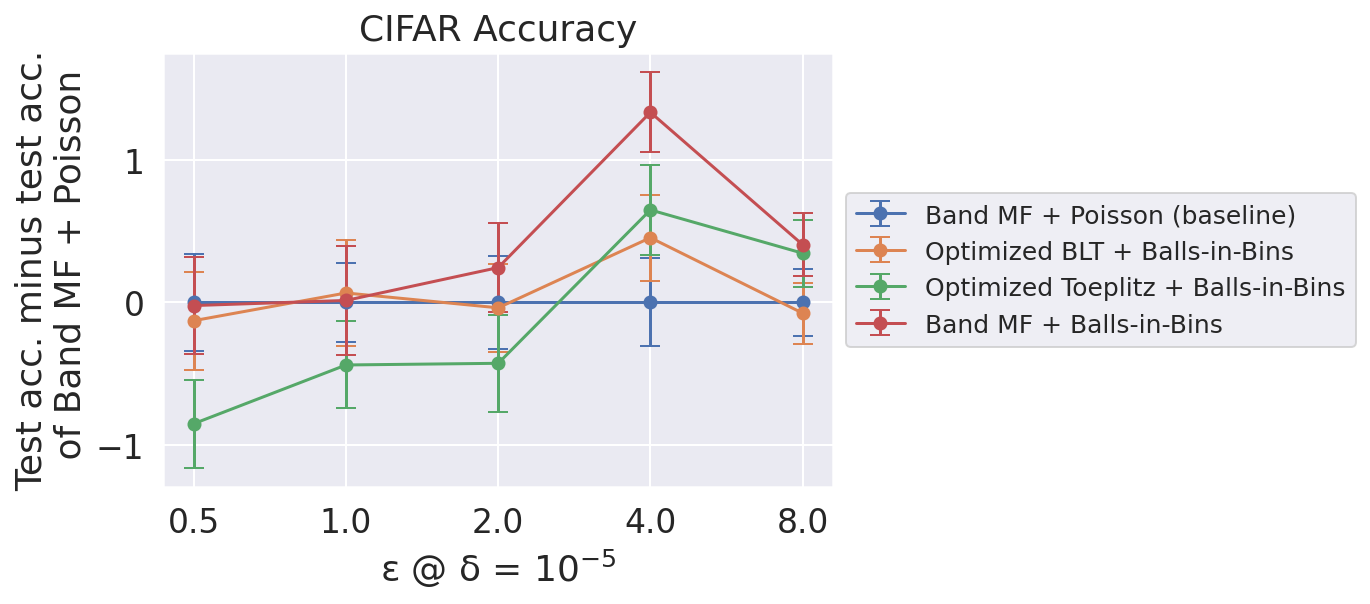}
    \caption{Comparison of accuracy on CIFAR of different correlated noise strategies and amplification methods, absolute and relative to the baseline of banded $\bfC$ + Poisson sampling, with 95\% confidence intervals over 100 trials. }
    \label{fig:cifar_accuracies}
\end{figure}

In~\cref{fig:cifar_accuracies}, we plot the test accuracy achieved by each combination of correlation matrix and amplification analysis. We can conclude the following:

\begin{itemize}
    \item\emph{Banded matrix factorization + balls-in-bins} is always at least as good as the baseline, and sometimes gives up to 1\% accuracy improvements. As argued earlier, the baseline assumes Poisson sampling which is a stronger/less practical assumption. 
    \item Our ability to optimize BLT matrices makes them also always at least as good as the baseline, but with smaller improvements than the previous bullet. The baseline uses up to $16$ bands depending on $\epsilon$, so~\emph{BLTs are also up to 4x more memory efficient than the baseline}.
    \item Optimized Toeplitz matrices are incomparable to the baseline; we believe their weak performance relative to the other methods is due to the difficulty of the optimization problem. 
\end{itemize}

\begin{figure}
\begin{center}
\begin{tabular}{|c|c|c|c|c|c|}
\hline
 Method / $\epsilon$ & 0.5 & 1.0 & 2.0 & 4.0 & 8.0 \\ 
 \hline
 BandMF + Poisson & 1.018 & 0.778 & 0.606 & 0.481 & 0.388 \\  
 \hline
 BandMF + Balls-in-Bins & 2.829 & 2.071 & 1.455 & 0.802 & 0.470\\  
 \hline
 Optimized BLT + Balls-in-Bins & 2.609 & 1.598 & 1.079 & 0.694 & 0.448 \\  
 \hline
 Optimized Toeplitz + Balls-in-Bins & 2.003 & 1.182 & 0.864 & 0.639 & 0.445 \\  
 \hline
\end{tabular}
\end{center}
\caption{Noise multiplier $\sigma$ used in the CIFAR10 experiments for different correlated noise strategies and amplification methods. Note that different strategies which have higher noise cancellation require higher $\sigma$, so a lower noise multiplier alone does not imply better learning performance.}
\end{figure}

\begin{figure}
\begin{center}
\begin{tabular}{|c|c|c|c|c|c|}
\hline
 Method / $\epsilon$ & 0.5 & 1.0 & 2.0 & 4.0 & 8.0 \\ 
 \hline
 BandMF + Poisson & 2 & 4 & 8 & 16 & 32 \\  
 \hline
 BandMF + Balls-in-Bins & 16 & 32 & 64 & 64 & 64\\  
 \hline
\end{tabular}
\end{center}
\caption{Number of bands used in the CIFAR10 experiments for BandMF combined with different amplification methods.}
\end{figure}

\subsubsection{Practical Implementation of Balls-in-Bins Batching}\label{sec:practical}

In Section~\ref{sec:cifar}, and in most empirical research on DP model training, an amplification method like Poisson sampling is assumed when computing the noise multiplier, but for simplicity a simpler sampling scheme is not actually implemented. The most common case of this is assuming Poisson sampling when instead doing shuffling, even though shuffling can produce much larger $\epsilon$ values than sampling \citep{chua2024private}. A notable exception is research built on Opacus~\citep{opacus} which implements Poisson sampling.

The main issue with implementing such sampling schemes properly is that they usually lead to unequal batch sizes (as schemes like shuffling which enforce equal batch sizes are usually hard to analyze tightly), which can lead to several inefficiencies:

\begin{itemize}
    \item Modern model training pipelines are optimized for a fixed batch size. For example, XLA, which is used to e.g. compile gradient computations implemented in libraries like TensorFlow and JAX, requires a static input shape. To use variable batch sizes, one would naively need to forgo the use of XLA which could heavily slow down training in large-scale settings.
    \begin{itemize}
        \item For example, consider Opacus which correctly implements Poisson sampling and supports variable batch sizes. In the benchmarking experiments of \citep{opacus}, Opacus was shown to have runtime competitive with JAX for small-scale DP training. However, they observed that e.g. JAX pays a large up-front cost for compilation but after compilation achieves better runtime per iteration than Opacus. We expect that for larger scale the training the time spent on compilation is a smaller fraction of training time, hence variable batch size training with Opacus would be less scalable than XLA-based training using fixed batch sizes.
    \end{itemize}
    \item Sampling a slightly larger batch can lead to disproportionately longer gradient computation times. As an extreme case, suppose we use an expected batch size of $B$, and we have $B$ accelerators that can each process a single gradient in parallel in time $T$. Hence computing a batch gradient on $B$ examples takes time $T$. However, if we have $B+1$ samples instead, we will need time $2T$ instead.
    \item Sampling a smaller batch leads to wasted computational resources. For example, in the above setting, if we sampled $9B/10$ examples, then we would have $B/10$ accelerators being unused, which may be undesirable.
\end{itemize}

To this end we propose and implement a practical variant of balls-in-bins batching. We do balls-in-bins batching, but:
\begin{itemize}
    \item If a batch has size less than $B$, we pad it with some examples that have gradient 0 so its batch size becomes $B$.
    \item If a batch has size greater than $B$, we truncate it to have size exactly $B$.
    \item We compute the average gradient over the padded/truncated batch (i.e. our normalization in the average is always $1/B$).
\end{itemize}

Recall that balls-in-bins (without padding and truncating) can be implemented by shuffling the dataset, sampling a batch size schedule, and then taking batches according to this schedule. Hence even with padding and truncating balls-in-bins requires minimal overhead on top of shuffling, and indeed we were easily able to implement it in our CIFAR training code using only a single shuffle on the dataset and \texttt{numpy}'s multinomial sampler and pad operations, and use XLA to compile gradient computations in the resulting code. Furthermore, none of these changes affects the privacy analysis. This is because (i) the fixed normalization is compatible with the privacy analysis which is implicitly analyzing at the sum rather than the average, (ii) truncating a batch is equivalent to some of the examples in that batch having gradient 0, and DP-SGD's privacy analysis allows arbitrary clipped gradient.

We next demonstrate that the loss in utility due to truncating examples or not saturating the batch size in every iteration is acceptable. We rerun the training procedure from the previous section using BLT matrices (we focus on BLT matrices as they are the most scalable choice of correlated noise \citep{mcmahan2024hasslefreealgorithmprivatelearning}), but implement our practical variant of balls-in-bins batching. We pad/truncate to the same batch size of 500, i.e. in terms of gradient computations our practical variant of balls-in-bins is no more costly than shuffling the dataset and using fixed batch sizes instead. 

In Figure~\ref{fig:implementation} we compare this implementation to (i) the same implementation but using shuffling instead of balls-in-bins batching, and (ii) the state-of-the-art unamplified baseline of MEMF.
We see that \textit{the improvements from amplification over the unamplified baseline are far larger than the loss in utility due to the suboptimal batching introduced by the sampling scheme}. To the best of our knowledge, this is the first DP model training result that simultaneously (i) implements the sampling scheme assumed when computing the noise multiplier, (ii) does so at a negligible cost in efficiency and in a manner compatible with modern machine learning frameworks, and (iii) still demonstrates improvements over the unamplified baseline.

\begin{figure}
    \centering
    \includegraphics[width=0.95\linewidth]{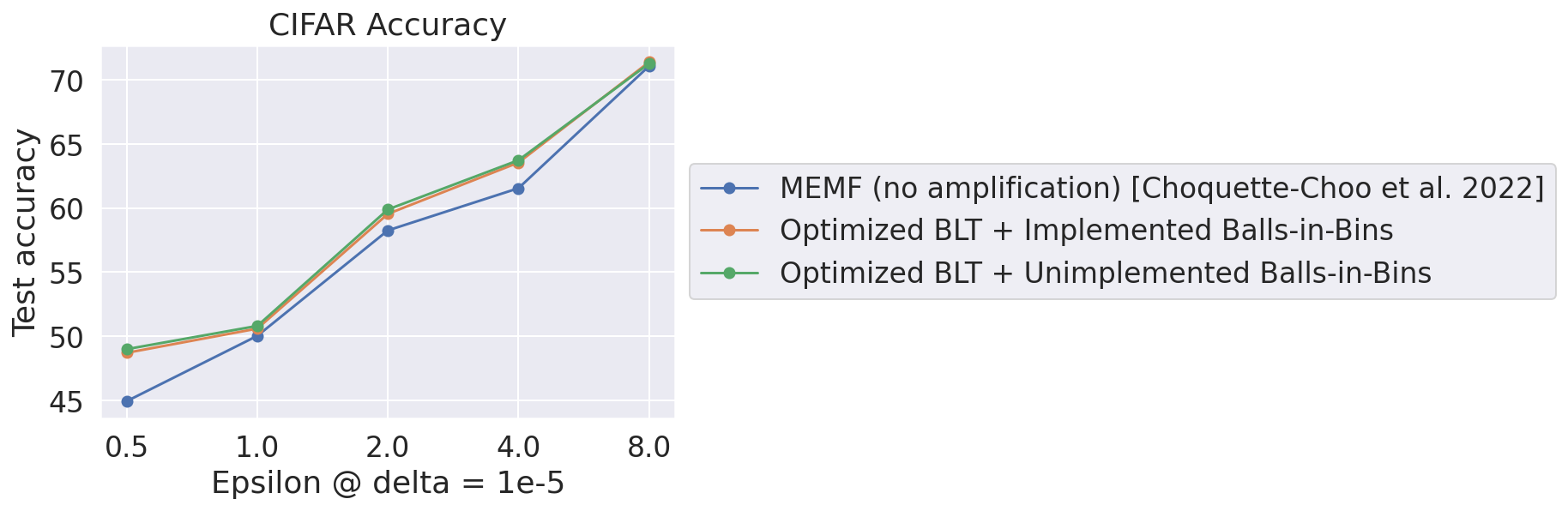}
    \caption{Comparison of implementing balls-in-bins batching to shuffling and assuming balls-in-bins in the analysis, and the unamplified baseline.}
    \label{fig:implementation}
\end{figure}

\bibliographystyle{iclr2025_conference}
\bibliography{ref}

\appendix
\section{Dimensionality reduction and adaptivity}\label{sec:dim-red}
To find a dominating pair for a matrix $\bfC$, we typically wish to argue that the worst case PLD is generated by a user that only ever inputs the same entry every iteration.
This simultaneously handles dimension reduction and adaptivity and is the strategy for unamplified privacy analyses (see, e.g., \citet{choquette2022multi}).
In the MEMF case, for it to be the case that the worst-case PLD is achieved by a user always outputting the same vector, it suffices to require $\bfC_{:, i} \cdot \bfC_{:, j} \geq 0$ for all indices $i,j$ for which a user can simultaneously appear in.
However, this is no longer the case for balls-in-bins sampling, even in a very simple case.
Consider the matrix
\begin{align*}
  \bfC =
  \begin{pmatrix}
    1 & 0 \\
    -1 & 1
  \end{pmatrix}
\end{align*}
with 1 epoch.
Then, the user only participates in exactly one of the first column or the second.
Say the user's inputs are $x = (a,b)$, where $a,b \in \{-1, 1\}$, which means that the MoG representing it
\begin{align*}
  \frac{1}{2}\mathcal{N}((a,-a), \sigma^2 I) +
  \frac{1}{2}\mathcal{N}((0, b), \sigma^2 I).
\end{align*}
Then, maximizing the inner product between $(a,-a)$ and $(0, b)$ will dominate the PLD w.r.t. $N(0, \sigma^2 I)$.
This is then maximized at $b=-a$, rather than $b=a$.
Thus, for balls-in-bins sampling, the inner products across columns that a user cannot simultaneously participate in matters.
Noting that this example hinged on the inner product between the two columns being negative, it is possible that $\bfC^T\bfC \geq 0$ entry-wise is sufficient to show that a user may as well only input the same vector.
\section{RMSE plots for other settings}
\label{sec:other-settings}

Here we plot the RMSE improvements through amplification and optimization for number of epochs in $\{8, 16, 32\}$ and iterations per epoch in $\{32, 64, 128\}$.

\subsection{Number of epochs $=8$ and iterations per epoch $=32$}
\begin{figure}[H]
    \centering
    \includegraphics[width=0.8\linewidth]{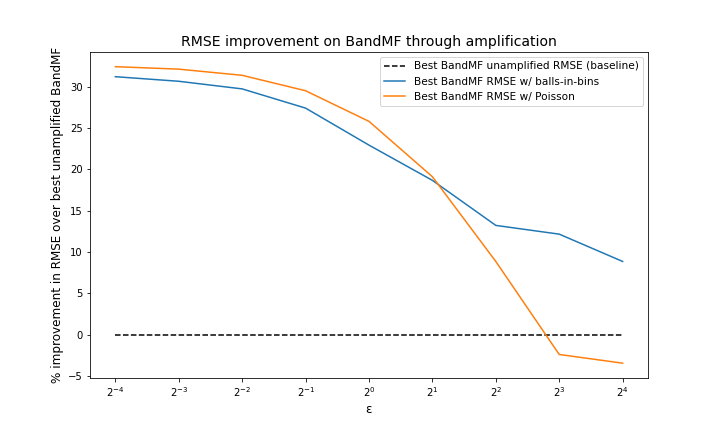}
    \caption{Improvement in RMSE over unamplified $b$-banded matrix factorization due to different amplification schemes. All curves optimize the choice of $b$ at each point.}
\end{figure}

\begin{figure}[H]
    \centering
    \includegraphics[width=.8\linewidth]{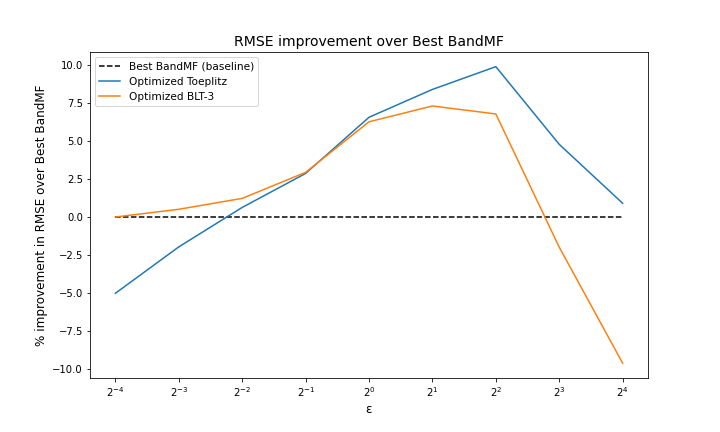}
    \caption{Improvement from optimizing $\bfC$ using objective accounting for amplification.}
\end{figure}

\subsection{Number of epochs $=8$ and iterations per epoch $=64$}
\begin{figure}[H]
    \centering
    \includegraphics[width=0.8\linewidth]{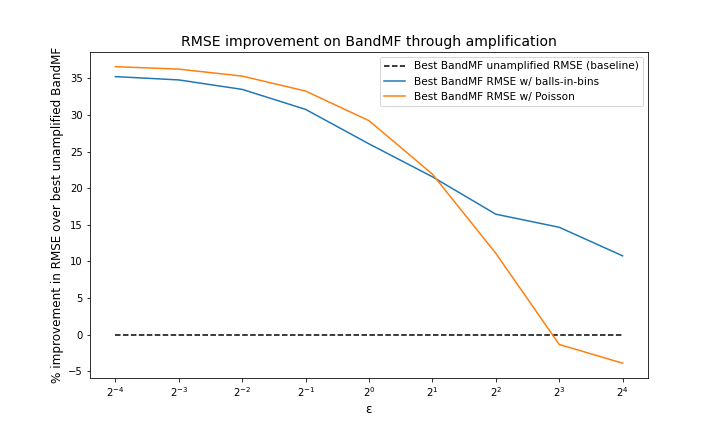}
    \caption{Improvement in RMSE over unamplified $b$-banded matrix factorization due to different amplification schemes. All curves optimize the choice of $b$ at each point.}
\end{figure}

\begin{figure}[H]
    \centering
    \includegraphics[width=.8\linewidth]{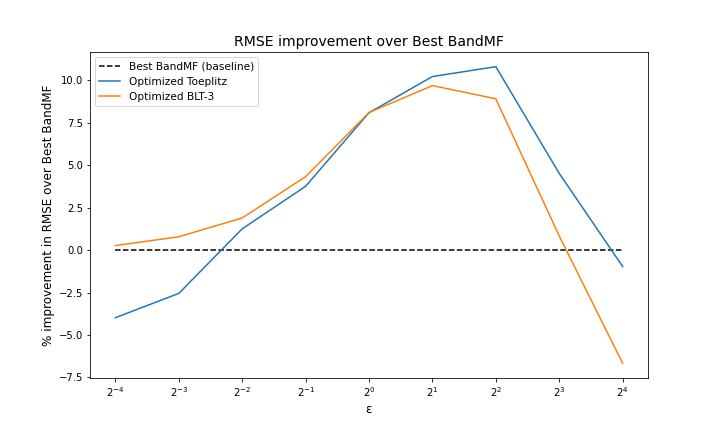}
    \caption{Improvement from optimizing $\bfC$ using objective accounting for amplification.}
\end{figure}

\subsection{Number of epochs $=8$ and iterations per epoch $=128$}
\begin{figure}[H]
    \centering
    \includegraphics[width=0.8\linewidth]{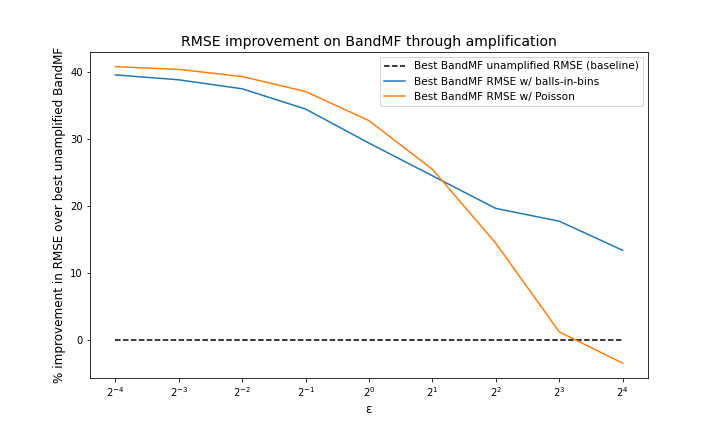}
    \caption{Improvement in RMSE over unamplified $b$-banded matrix factorization due to different amplification schemes. All curves optimize the choice of $b$ at each point.}
\end{figure}

\begin{figure}[H]
    \centering
    \includegraphics[width=.8\linewidth]{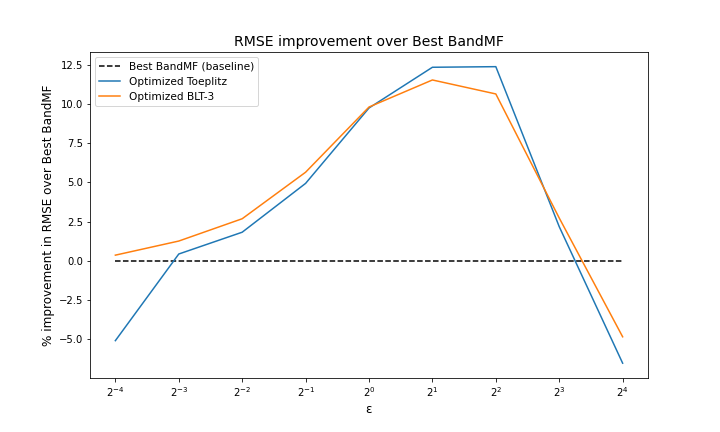}
    \caption{Improvement from optimizing $\bfC$ using objective accounting for amplification.}
\end{figure}

\subsection{Number of epochs $=16$ and iterations per epoch $=32$}
\begin{figure}[H]
    \centering
    \includegraphics[width=0.8\linewidth]{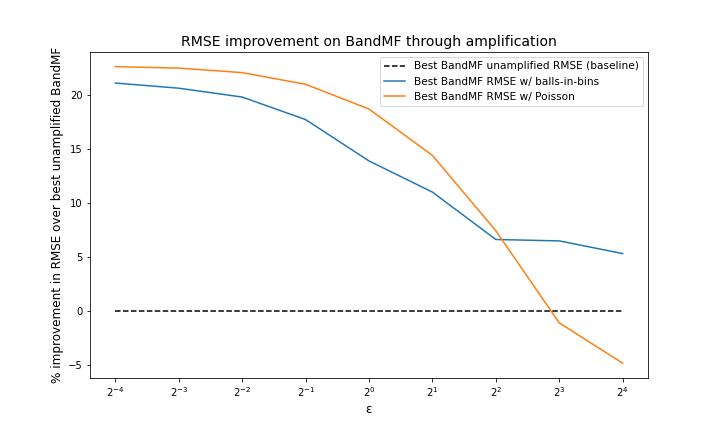}
    \caption{Improvement in RMSE over unamplified $b$-banded matrix factorization due to different amplification schemes. All curves optimize the choice of $b$ at each point.}
\end{figure}

\begin{figure}[H]
    \centering
    \includegraphics[width=.8\linewidth]{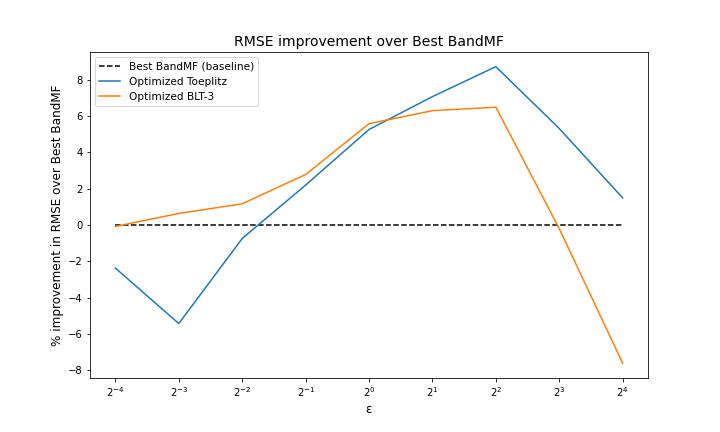}
    \caption{Improvement from optimizing $\bfC$ using objective accounting for amplification.}
\end{figure}

\subsection{Number of epochs $=16$ and iterations per epoch $=64$}
\begin{figure}[H]
    \centering
    \includegraphics[width=0.8\linewidth]{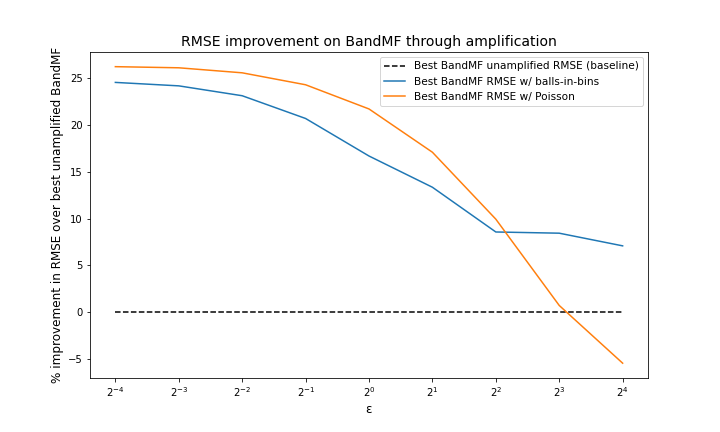}
    \caption{Improvement in RMSE over unamplified $b$-banded matrix factorization due to different amplification schemes. All curves optimize the choice of $b$ at each point.}
\end{figure}

\begin{figure}[H]
    \centering
    \includegraphics[width=.8\linewidth]{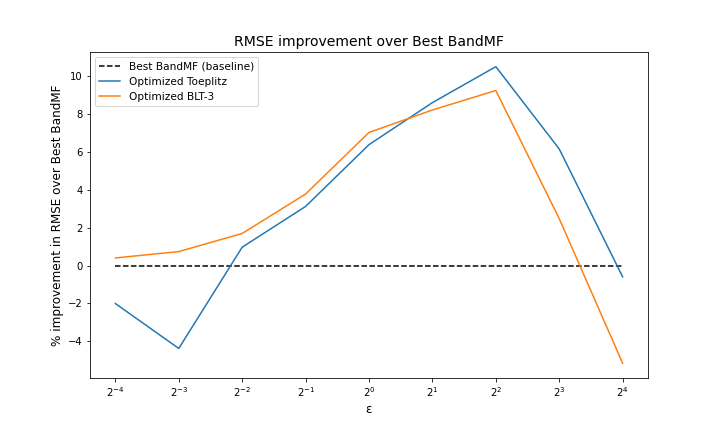}
    \caption{Improvement from optimizing $\bfC$ using objective accounting for amplification.}
\end{figure}

\subsection{Number of epochs $=16$ and iterations per epoch $=128$}
\begin{figure}[H]
    \centering
    \includegraphics[width=0.8\linewidth]{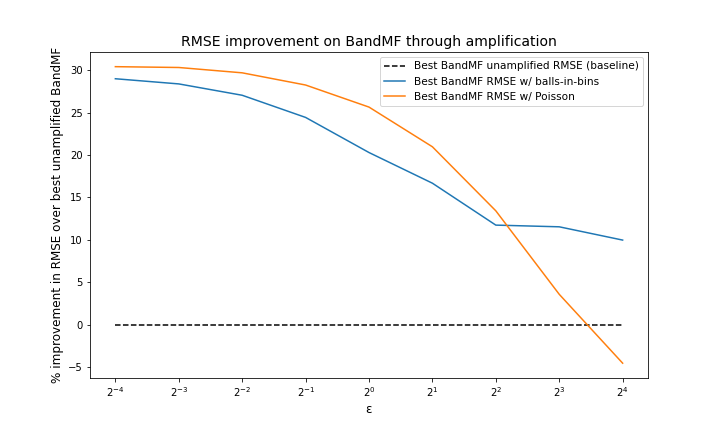}
    \caption{Improvement in RMSE over unamplified $b$-banded matrix factorization due to different amplification schemes. All curves optimize the choice of $b$ at each point.}
\end{figure}

\begin{figure}[H]
    \centering
    \includegraphics[width=.8\linewidth]{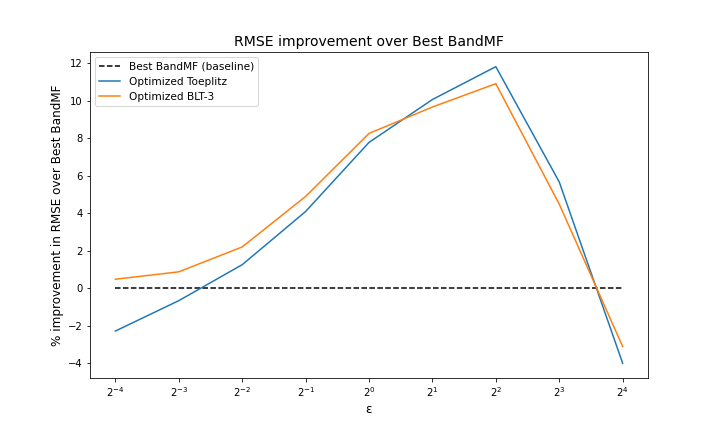}
    \caption{Improvement from optimizing $\bfC$ using objective accounting for amplification.}
\end{figure}

\subsection{Number of epochs $=32$ and iterations per epoch $=32$}
\begin{figure}[H]
    \centering
    \includegraphics[width=0.8\linewidth]{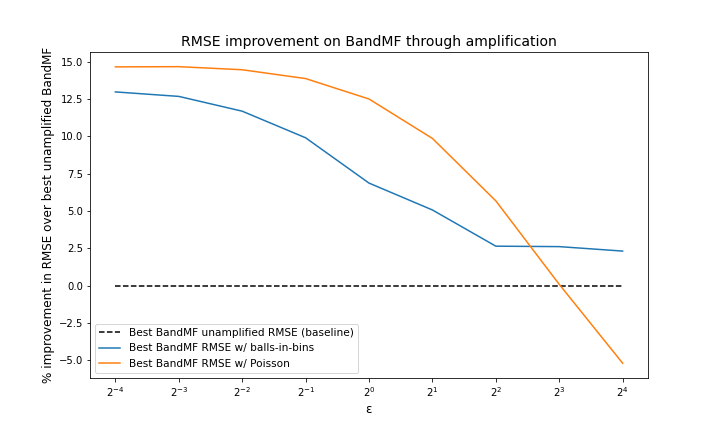}
    \caption{Improvement in RMSE over unamplified $b$-banded matrix factorization due to different amplification schemes. All curves optimize the choice of $b$ at each point.}
\end{figure}

\begin{figure}[H]
    \centering
    \includegraphics[width=.8\linewidth]{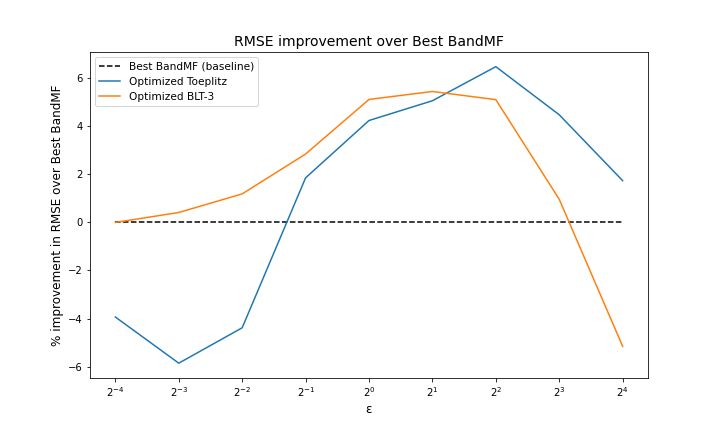}
    \caption{Improvement from optimizing $\bfC$ using objective accounting for amplification.}
\end{figure}

\subsection{Number of epochs $=32$ and iterations per epoch $=64$}
\begin{figure}[H]
    \centering
    \includegraphics[width=0.8\linewidth]{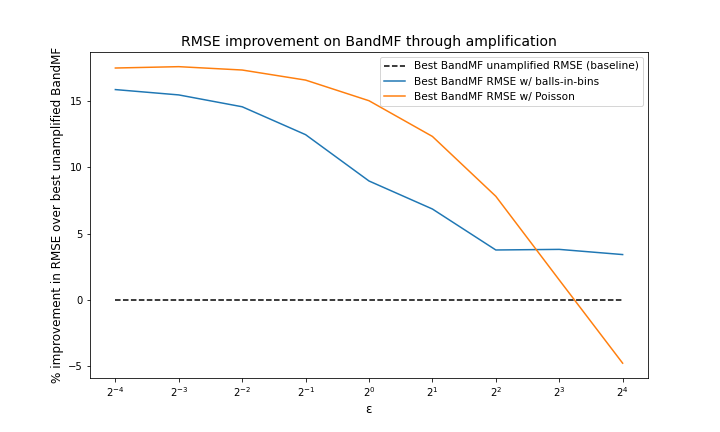}
    \caption{Improvement in RMSE over unamplified $b$-banded matrix factorization due to different amplification schemes. All curves optimize the choice of $b$ at each point.}
\end{figure}

\begin{figure}[H]
    \centering
    \includegraphics[width=.8\linewidth]{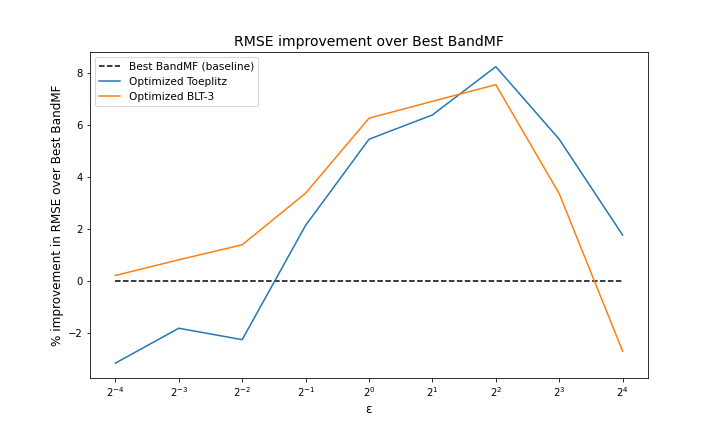}
    \caption{Improvement from optimizing $\bfC$ using objective accounting for amplification.}
\end{figure}

\subsection{Number of epochs $=32$ and iterations per epoch $=128$}
\begin{figure}[H]
    \centering
    \includegraphics[width=0.8\linewidth]{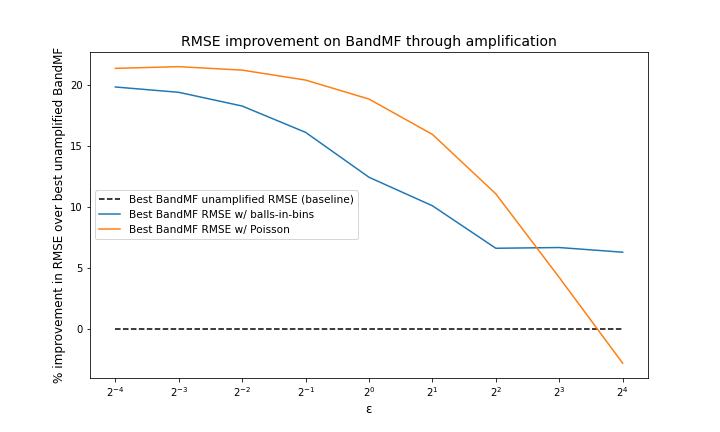}
    \caption{Improvement in RMSE over unamplified $b$-banded matrix factorization due to different amplification schemes. All curves optimize the choice of $b$ at each point.}
\end{figure}

\begin{figure}[H]
    \centering
    \includegraphics[width=.8\linewidth]{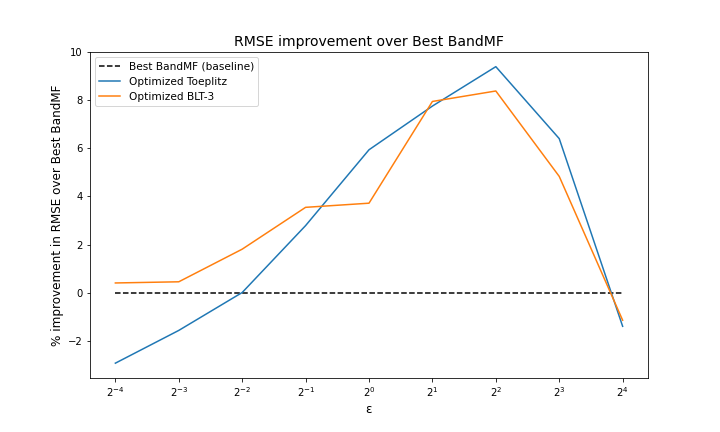}
    \caption{Improvement from optimizing $\bfC$ using objective accounting for amplification.}
\end{figure}

\end{document}